  \providecommand\BibTeX{{%
    \normalfont B\kern-0.5em{\scshape i\kern-0.25em b}\kern-0.8em\TeX}}}
\DeclareMathOperator*{\argmax}{arg\,max}
\tikzset{strike thru arrow/.style={
    decoration={markings, mark=at position 0.5 with {
        \draw [blue, thick,-] 
            ++ (-0.15cm, -0.15cm) 
            -- ( 0.15cm,  0.15cm);}
    },
    postaction={decorate},
}}
\theoremstyle{plain}
\newtheorem{thm}{Theorem}
\newtheorem{lem}[thm]{Lemma}
\theoremstyle{definition}
\newtheorem{defn}[thm]{Definition}
\theoremstyle{remark}
\newcommand{\nc}{\newcommand}
\nc{\dmo}{\DeclareMathOperator}
\nc{\rnc}{\renewcommand}
\nc{\B}[1]{\mathbb{#1}}  
\renewcommand{\C}[1]{\mathcal{#1}}
\nc{\Sc}[1]{\mathscr{#1}}
\nc{\F}[1]{\mathfrak{#1}}
\nc{\cdes}[1][]{\cde_{\Ob #1}}
\nc{\cdeshat}[1][]{\widehat{\cde}_{\Ob #1}}
\nc{\tes}[1][]{\te_{\Ob #1}}
\nc{\bbE}{\B{E}}  
\nc{\wt}{\widetilde}
\nc{\wh}{\widehat}
\nc{\la}{\langle}
\nc{\ra}{\rangle}
\nc{\fvert}{\left\vert\vphantom{\frac11}\right.}
\nc{\unif}{\textsc{Unif}}
\nc{\cloan}{c_{\textsc{Loan}}}
\nc{\cinfo}{c_{\textsc{Info}}}
\nc{\bern}{\textsc{Bern}}
\nc{\decmak}{decision-maker }
\nc{\indep}{\protect\mathpalette{\protect\independenT}{\perp}}
\def\independenT#1#2{\mathrel{\rlap{$#1#2$}\mkern2mu{#1#2}}}
\rnc{\epsilon}{\varepsilon}
\dmo{\cde}{\textsc{cde}}
\dmo{\te}{\textsc{te}}
\dmo{\var}{\textsc{Var}}
\dmo{\bnm}{\textsc{Binom}}
\dmo{\EE}{\mathbb{E}}
\dmo{\Ob}{Ob}
\dmo{\naive}{\textsc{naive}}
\nc{\cs}{c_{\textsc{S}}}
\nc{\ca}{c_{\textsc{A}}}
\nc{\varcs}{C^{\textsc{S}}}
\dmo{\as}{\mathbf{AS}}
\dmo{\rs}{\mathbf{RS}}
\dmo{\ns}{\mathbf{NS}}
\dmo{\rrs}{\mathbf{rRS}}
\begin{document}
\fancyhead{}
\title{Fair Allocation through Selective Information Acquisition}

\author{William Cai}
\orcid{1234-5678-9012}
\affiliation{%
  \institution{Stanford University}
  \postcode{94305}
}
\email{willcai@stanford.edu}

\author{Johann Gaebler}
\orcid{1234-5678-9012}
\affiliation{%
  \institution{Stanford University}
  \postcode{94305}
}
\email{jgaeb@stanford.edu}

\author{Nikhil Garg}
\orcid{1234-5678-9012}
\affiliation{%
  \institution{Stanford University}
  \postcode{94305}
}
\email{nkgarg@stanford.edu}

\author{Sharad Goel}
\orcid{1234-5678-9012}
\affiliation{%
  \institution{Stanford University}
  \postcode{94305}
}
\email{scgoel@stanford.edu}

\renewcommand{\shortauthors}{Cai et al.}

\begin{abstract}
Public and private institutions must often
allocate scarce resources under uncertainty. 
Banks, for example, extend credit to loan applicants based in part on their estimated likelihood of repaying a loan.
But when the quality of information differs across candidates (e.g., if some applicants lack traditional credit histories), common lending strategies can lead to disparities across groups.
Here we consider a setting in which decision makers---before allocating resources---can choose to spend some of their limited budget further screening select individuals.
We present a computationally efficient algorithm for deciding whom to screen that maximizes a standard measure of social welfare. 
Intuitively, decision makers should screen candidates on the margin, for whom the additional information could plausibly alter the allocation. 
We formalize this idea by showing the problem can be reduced to solving a series of linear programs.
Both on synthetic and real-world datasets, this strategy improves utility, illustrating the value of targeted information acquisition in such decisions.
Further, when there is social value for distributing resources to groups for whom we have \emph{a priori} poor information---like those without credit scores---our approach can substantially improve the allocation of limited assets. 
\end{abstract}

\begin{CCSXML}
<ccs2012>
<concept>
<concept_id>10003752.10003809.10003716</concept_id>
<concept_desc>Theory of computation~Mathematical optimization</concept_desc>
<concept_significance>500</concept_significance>
</concept>
<concept>
<concept_id>10010147.10010257</concept_id>
<concept_desc>Computing methodologies~Machine learning</concept_desc>
<concept_significance>500</concept_significance>
</concept>
<concept>
<concept_id>10010405.10010455.10010460</concept_id>
<concept_desc>Applied computing~Economics</concept_desc>
<concept_significance>300</concept_significance>
</concept>
</ccs2012>
\end{CCSXML}

\ccsdesc[500]{Theory of computation~Mathematical optimization}
\ccsdesc[500]{Computing methodologies~Machine learning}
\ccsdesc[300]{Applied computing~Economics}

\keywords{Algorithmic fairness, lending, information acquisition}

\maketitle

\section{Introduction}
Approximately one in seven U.S. households have unmet demand for small-dollar loans,
and are often unable to secure credit from traditional financial institutions as they have little or no formal credit history~\cite{fdic2018}.
However, in the majority of these households, individuals receive regular income and typically pay their bills on time, 
which suggests many in fact would have low risk of default~\cite{fdic2018}.
One barrier to providing loans to this low-risk yet underserved subpopulation is that it can be more expensive and time-consuming to screen individuals with non-traditional financial histories,
limiting the inclusiveness of the banking system.

Motivated by this problem, we propose and analyze a
strategy in which one can pay to acquire additional information on applicants. 
The task for a budget-constrained decision maker is thus to first select a set of candidates to screen and then, given the results of that screening process, determine to whom to allocate the remaining resources.
In modeling this situation, we assume there is a fixed cost for screening each applicant,
and that decision makers have prior knowledge of the distribution of information they would receive if they choose to screen an applicant.
In practice, we note that such prior knowledge could be obtained by screening a small random sample of applicants to learn the resulting information distributions.

We derive an efficient algorithm for computing an optimal, utility-maximizing strategy for the general screening and allocation problem.
To do so, we first show that once a set of candidates has been selected to screen, it is optimal to allocate the remaining resources according to a threshold rule, with
assets distributed to those candidates having post-screening expected utility above a fixed threshold.
Further, for any fixed threshold policy, we show that one can find the optimal set of candidates to screen (while satisfying the budget constraint) via a linear program.
Intuitively, one should screen candidates near the margin, for whom the screening process could reveal information that could push a candidate across the threshold.
But to do this rigorously, one also needs to account for the precise structure of the prior information.
Finally, we sweep over the possible thresholds, solving the corresponding linear program at each point.
In this manner, we obtain both a rule to screen candidates and a specific threshold policy for distributing funds to candidates with sufficiently high post-screening value.

We further consider an extension of the above problem in which policymakers have explicit value for diversity. 
For example, instead of simply finding a max-utility policy, one might maximize utility subject to the constraint that a particular group---such as those who traditionally have had limited access to credit markets---achieve at least a fixed minimum utility.
In the United States, those with unmet demand for credit are disproportionately black and Hispanic~\cite{fdic2018}, heightening the value of diversity considerations in allocation decisions.
We show that this (and related) extensions can be incorporated into our general algorithmic approach in a straightforward manner.

To demonstrate the potential value of augmenting allocation decisions with a screening phase, 
we apply our methods to both synthetic datasets and one with real measures
 of creditworthiness.
In particular, we examine the potential benefits of screening as a function of the cost and value of information.
Especially when we impose a diversity constraint,
we find that screening strategies can significantly outperform a naive strategy that simply attempts to satisfy the constraint without screening any applicants.

For concreteness, we frame our discussion in terms of lending decisions, though our approach applies to many allocation settings.
For example, it is often challenging to accurately assess household wealth---particularly in countries where informal and irregular work is more common---and, in turn, to appropriately target the distribution of government subsidies~\cite{noriega2019active}. 
The simple strategy of distributing funds to those families clearly in need can systematically overlook populations with harder-to-verify financial status.
As with lending, one can judiciously allocate some of the budget to more extensively screen certain applicants, ensuring funds are ultimately distributed to those who can benefit the most.

\section{Related Work}

Several papers address the problem of \emph{active feature acquisition}~\citep{melville2004active,melville2005expected,saar2009active},
where one can selectively purchase missing data to improve the overall out-of-sample performance of a statistical model.
We consider the related problem of acquiring features to identify specific, high-value individuals.
While there is some shared intuition between the two settings---that one should seek information on individuals most likely to alter downstream decisions---the technical approach we take is different, in large part because our end goal is optimal allocation rather than statistical learning.

In a related, recent stream of research,
\citet{bakker2019fairness} and \citet{noriega2019active}
likewise consider a feature acquisition problem, but with
fairness constraints.
In their setting, the decision maker must acquire additional features for each individual to ensure classification decisions have similar errors rates across groups---including parity in false negative and false positive rates---a common measure of fairness in the machine learning community~\cite{hardt2016,kleinberg2016inherent,chouldechova2016fair}.
Our approach to the problem differs in three important respects.
First, we adopt the perspective of constrained utility maximization.
Past work has shown that directly equalizing error rates can lead to outcomes that, counterintuitively, may harm the very groups they were designed to protect~\cite{corbett2018measure,corbett2017,liu2018delayed}.
We avoid such deleterious outcomes by instead framing the problem explicitly in terms of group-specific utilities:
fairness is encoded into our requirement that the decision maker must allocate some minimum amount of utility to each group.
Second, we focus on one-shot screening decisions, in which decision makers simply choose whether or not to acquire information on each individual, rather than sequentially deciding how much information to acquire based on the results of each past acquisition decision.
Our one-shot formulation maps to the binary decision structure (i.e., to screen or not to screen) common in many institutions and leads to different optimization challenges.
Third, we directly model the tradeoff between screening and allocation decisions by tying both to a common budget constraint (i.e., more screening means less funds are available to ultimately distribute to individuals).

\citet{elzayn2019fair} also consider equitable ways to allocate resources, but in a setting where learning occurs by repeatedly allocating resources instead of by purchasing information. In comparison to our work, they focus on solving the problem of censored feedback, where the learner may not understand, and thus never allocate resources to, groups who did not previously receive resources.

Finally, our work touches on research from the fair division and allocation literature \citep{brams1996fair,
moulin2004fair,thomson2011fair}, which considers how to share resources while satisfying fairness properties defined between individuals.
The former devises mechanisms wherein strategic agents self-divide the resources fairly, and the latter studies the existence of allocations that jointly satisfy various fairness notions. 
In contrast to our work, that line of research is particularly concerned with individual incentives, strategic action, and equilibrium effects.

\section{A Model of Screening and Allocation}

We model screening and allocation decisions as a sequential process in which a budget-constrained lender first selects a (possibly random) subset of candidates to further screen from a pool of \(n\) applicants, and then, based on the information revealed in that screening phase, selects a second (possibly random) subset of candidates to receive a loan.

We assume the value of lending to an applicant \(i\) is given by the random variable \(U_i\).
These utilities are intended to capture the full social value of providing loans, and we imagine the lender aims to optimize social welfare, as in the case of a government agency.
In general, the lender has only partial information about \(U_i\). More specifically, if the lender chooses not to screen an applicant, we assume the lender knows only the applicant's conditional expectation \(\mu_i = \EE [U_i \mid X_i = x_i]\)
given their pre-screening covariates \(x_i\), such as credit score for those applicants who have traditional credit histories.
On the other hand, if the lender opts to screen an applicant, they learn \(\EE[U_i \mid X_i = x_i, \tilde{X}_i = \tilde{x}_i]\), where \(\tilde{x}_i\) denotes the additional information one gains through screening. 
For example, \(\tilde{x}_i\) might encode applicant \(i\)'s history of paying
their electricity or phone bills---information that is often feasible to acquire with some extra effort and which is a good indicator of creditworthiness~\cite{fdic2018}.

When deciding whom to screen, we assume 
the lender knows the distribution of \(\C D = (D_1, \ldots, D_n)\), where \(D_i = \EE[U_i \mid X_i = x_i, \tilde{X}_i]\).
That is, the lender knows how their estimate of utility could change if they decide to screen each applicant, where these distributions may depend on the available pre-screening covariates.
A lender may, for example, thus choose only to screen applicants whose estimate is likely to substantially change given additional information.
We further assume the lender must pay a
fixed cost \(\cs\) for screening an applicant
and a cost \(\ca\) for underwriting a loan. 
For simplicity we assume these costs do not vary across applicants, though it is straightforward to extend to the more general case; see the online appendix for details.

Based on knowledge of the above information and cost structure, the lender selects a randomized strategy to screen applicants.
That is, the lender chooses a vector 
\(\C P = (p_1, \dots, p_n)\), meaning that each applicant \(i\) is selected to be screened independently with probability \(p_i\).
Let \(\C S = (S_1, \dots, S_n)\) indicate which applicants are ultimately screened under this policy; therefore, \(S_i \in \{0,1\}\) is a Bernoulli random variable with probability of success \(p_i\).

Given this randomized screening policy, we can now write the information \(\hat {\C U} = (\hat U_1, \ldots, \hat U_n)\) the lender has at the end of the screening phase as follows:
    \begin{equation}
        \hat U_i = \begin{cases}
            \EE[U_i \mid X_i = x_i] & \text{ if } S_i = 0, \\
            \EE[U_i \mid X_i = x_i, \tilde{X}_i] & \text{ if } S_i =1.
        \end{cases}
    \end{equation}
In other words, \(\hat U_i\) is the lender's post-screening estimated utility of giving a loan to applicant \(i\).
In particular, if \(S_i = 1\) (i.e., the applicant is screened), the lender's estimate changes from \(\EE[U_i\mid X_i = x_i]\),
the estimate based only on applicant \(i\)'s pre-screening covariates \(x_i\),
to 
\(\bbE[U_i \mid X_i = x_i, \tilde{X}_i = \tilde{x}_i] \), which incorporates the post-screening information \(\tilde{x}_i\).

Finally, the lender chooses an allocation policy, denoted \(\C A(\hat {\C U}) = (A_1, \dots A_n)\), 
where \(A_i \in [0,1]\)
specifies the probability a loan is (independently) offered to each applicant.
Importantly, \(\C A\) is a function of 
the lender's post-screening utility estimates
\(\hat {\C U}\).
For example, the lender might give loans to the individuals with the highest post-screening utility estimates, up to the budget constraint.

Combining all of the above, the lender's optimization problem is to choose screening and allocation policies \((\C P^*, \C A^*)\) that maximize expected welfare,
\begin{equation}
\label{eq:def1}
    (\C P^*, \C A^*) \in \argmax_{\C P, \C A} \bbE\left[\sum_{i = 1}^n U_i A_i\right],
\end{equation}
subject to being budget-balanced in expectation,%
\footnote{By requiring the budget constraint to hold only in expectation---rather than exactly---we are able to find a computationally tractable solution to the problem. In practice, such a constraint means that agencies are allowed some flexibility as long as they don't overspend on average,
which, we believe, is often a realistic requirement.
}

\begin{equation}
\label{eq:def2}
    \EE \left[ \sum_{i = 1}^n \cs S_i + \ca  A_i \right] \leq B,
\end{equation}
where \(B\) is a fixed, non-negative constant.

In some settings, decision makers may value diversity in their allocations.
For example, they may wish to ensure a certain minimum number of loans are provided to groups that historically have been excluded from credit markets.
One can encode this policy preference directly into the utilities, in which case the resulting optimization problem would incorporate one's value for diversity.
That approach, however, requires decision makers to agree upon these utilities to interpret the results, which can be challenging. 

Here we take a complementary approach that explicitly allows value for diversity to differ across decision makers.
Suppose the applicant pool is partitioned into \(m\) groups \(\{G_1, \ldots, G_m\}\);
for example, if \(m = 2\), we might partition candidates into those who traditionally have had access to credit markets and those who have not.
Then we require the selected policy \((\C{P}^*, \C{A}^*)\)
to allocate at least \(\Lambda_j \geq 0\) utility to group \(G_j\), where these utilities do not themselves include any value for diversity:
    \begin{equation}
    \label{eq:def3}
        \EE \left[ \sum_{i \in G_j} U_i A_i \right] \geq \Lambda_j.
    \end{equation}
In practice, as we discuss below, one would solve this optimization problem for a range of \(\Lambda\), which traces out the Pareto frontier of possible policies across different group constraints, corresponding to different values for diversity.
The diversity condition above is expressed in terms of utility, but we might, alternatively,
simply lower bound the \emph{number} of loans \(\sum_{i \in G_j} A_i \) given to members of each group \(G_j\).
This alternative constraint can be handled in a straightforward manner by our algorithm detailed below.

\subsection{A stylized example}
We illustrate the above ideas in the context of a simple, stylized example.
Suppose a lender must decide how best to allocate loans among an applicant pool of 13 people, with an overall budget of \(\$2{,}000\). 
Providing a loan costs \(\ca = \$400\), 
and additional screening of an applicant costs \(\cs = \$50\).

Further suppose that five of the applicants are able to provide rich credit histories, and the expected utility of giving each of them a loan is \(\mu_i = \$750\). For this group, additional screening would not provide any more information.
Imagine that the other eight applicants do not have formal credit histories, and the utility of giving them a loan is accordingly lower due to the risk of default, with \(\mu_i = \$500\).
However, the lender knows that these applicants
come in two types that could be disambiguated through additional screening.
More specifically, the lender knows that after screening, individuals in this group can be divided into those with expected utility \$1,000 (with 50\% chance) and those with expected utility \$0 (with 50\% chance). 
The high-utility group could, for example, correspond to those who demonstrate a history of consistently paying their bills on time.

The naive strategy that does not screen any individuals would allocate five loans to hit the budget constraint, since \(5 \times \$400 = \$2{,}000\).
All five loans would go to applicants with rich credit histories (\(\mu_i = \$750\)) over those without (\(\mu_i = \$500\)).
In this case, the total expected utility of the no-screening allocation is \(5 \times \$750 = \$3{,}750\).

But in this scenario, one can improve overall utility by screening all eight applicants without formal credit histories and then granting loans to the high-utility applicants that are identified.
Under that strategy, we expect four of the eight screened applicants to be identified as high utility (\(\$1{,}000\)),
and so the expected utility of the allocation is \(4 \times \$1{,}000 = \$4{,}000\), greater than the expected utility of \$3,750 under the no-screening strategy.
Finally, the expected cost of the strategy is 
\(8 \times \$50 = \$400\) for screening plus 
\(4 \times \$400 = \$1{,}600\) for distributing the loans, totaling \(\$2{,}000\) and satisfying the budget constraint.
In this example, one can thus improve overall utility---and even allocate more loans to the group that \emph{a priori} appears worse---by incorporating additional screening into the decision-making process.

\section{Finding Optimal Policies}
\label{sec:solution}

We now derive an efficient algorithm  to find optimal screening and allocation policies \((\C{P}^*, \C{A}^*)\), subject to the budget and  diversity constraints. 
We start by showing that 
over the full space of policies,
it is optimal to allocate resources according to a
\emph{threshold policy}, with loans
dispersed to individuals having post-screening expected utility \(\hat U_i\) above a group-specific threshold \(t_j\) for \(j = 1, \ldots, m\).
Then, for each threshold policy, we show the optimal screening policy can be obtained by solving a linear program (LP), a type of optimization problem with linear objective function and linear constraints, for which their exist fast solution methods~\cite{bertsimas1997introduction}.
As a result, we can find an optimal combined screening and allocation policy by sweeping over threshold policies and solving the corresponding LP for each such policy.

We begin by formally defining threshold policies.
\begin{defn}[Threshold Policy]
    A \emph{threshold policy} is an allocation policy \(\C A\) for some fixed \(t_j \in \B R \cup \{-\infty, \infty\}\) and \(\alpha_j \in [0,1]\), \(j = 1, \ldots, m\), such that
        \begin{equation*}
            A_i = \begin{cases}
                1 & \hat U_i > t_{g_i},\\
                \alpha_{g_i} & \hat U_i = t_{g_i}, \\
                0 & \hat U_i < t_{g_i},
            \end{cases}
        \end{equation*}
    where \(g_i\) denotes the group membership of individual \(i\).
\end{defn}

Threshold policies deterministically allocate resources to those with post-screening expected utilities above a
fixed, group-specific threshold \(t_j\). Randomization (i.e., allocating resources with probability \(\alpha_j\)) at the threshold \(t_j\) may be necessary to exactly satisfy the budget constraint,
which is important because for an individual not screened, the distribution of \(\hat U_i\) is concentrated at a single point.

Theorem~\ref{thm:main} below formally states that
it is sufficient to restrict  to the set of threshold policies when searching for a globally optimal screening and allocation policy.

\begin{thm}
\label{thm:main}
    Suppose the constrained optimization problem defined by Eqs.~\eqref{eq:def1}, \eqref{eq:def2}, and \eqref{eq:def3} has a solution \((\C P^*, \C A^*)\). Then there is a threshold policy \(\C T^*\) such that \((\C P^*, \C T^*)\) is also a solution.
\end{thm}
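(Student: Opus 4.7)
The plan is to fix the given screening policy $\C P^*$ and construct a threshold allocation $\C T^*$ that, paired with $\C P^*$, attains the same optimal objective value as $\C A^*$. First, I would reduce the allocation subproblem using the tower property: because $\C A$ is a measurable function of $\hat{\C U}$, we have $\EE[U_i A_i] = \EE[A_i \, \EE[U_i \mid \hat{\C U}]] = \EE[\hat U_i A_i]$, so every occurrence of $U_i$ in the objective~\eqref{eq:def1} and in the diversity constraints~\eqref{eq:def3} can be replaced by $\hat U_i$. A second conditioning step shows that replacing each $A_i$ with its regression $f_i(\hat U_i) = \EE[A_i \mid \hat U_i]$ preserves all three expectations in Eqs.~\eqref{eq:def1}--\eqref{eq:def3}; hence we may assume without loss of generality that $A_i = f_i(\hat U_i)$ for some measurable $f_i \colon \B R \to [0,1]$.

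Next I would observe that the reduced problem depends on the $f_i$'s only through the $2n$ scalar moments $\EE[f_i(\hat U_i)]$ and $\EE[\hat U_i f_i(\hat U_i)]$, so the set of achievable moment vectors is a bounded convex subset of $\B R^{2n}$ and the allocation problem is effectively a finite-dimensional LP. Attaching non-negative Lagrange multipliers $\lambda$ to the budget constraint and $\gamma_j$ to each group-$j$ diversity constraint yields a Lagrangian that, up to constants, equals
\begin{equation*}
    \sum_{i=1}^n \EE\!\left[\bigl((1+\gamma_{g_i})\hat U_i - \lambda \ca\bigr) f_i(\hat U_i)\right].
\end{equation*}
This decouples across $i$ and across the support of $\hat U_i$, so the pointwise maximizer over $f_i(u) \in [0,1]$ is precisely $f_i(u)=1$ when $u > \lambda \ca / (1 + \gamma_{g_i})$, $f_i(u)=0$ when the inequality reverses, and $f_i(u)$ free in $[0,1]$ at equality---exactly a threshold policy with group-specific thresholds $t_j = \lambda \ca / (1 + \gamma_j)$ and randomization parameters $\alpha_j$.

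To close the argument I would invoke strong LP duality on the reduced finite-dimensional problem, which holds automatically given a finite optimum: this yields multipliers $(\lambda, \gamma_j)$ for which complementary slackness matches the diversity constraints active under $\C A^*$, after which the $\alpha_j$ are tuned so that the budget constraint binds exactly. The main obstacle will be carrying out the moment-reduction rigorously in the presence of atoms. The law of $\hat U_i$ is a mixture of a point mass at $\EE[U_i \mid X_i = x_i]$ (when $S_i = 0$) and the distribution of $\EE[U_i \mid X_i = x_i, \tilde X_i]$ (when $S_i = 1$), so one must verify that threshold $f_i$'s exhaust the efficient frontier of the achievable-moment set, and that when a threshold happens to coincide with an atom the randomization parameter $\alpha_j$ provides enough flexibility to hit the optimal $\EE[A_i^*]$ exactly.
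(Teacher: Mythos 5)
Your overall strategy is sound and genuinely different from the paper's. The paper never invokes duality: it proves an exchange lemma (Lemma~\ref{lem:main}) showing that a single-threshold policy cannot be beaten in utility at equal expected cost (nor undercut in cost at equal utility), plus an intermediate-value lemma (Lemma~\ref{lem:full_range}) showing threshold policies sweep out every achievable expected cost; the theorem then follows by constructing, group by group, a threshold policy matching the expected cost of \(\C A^*_j\) and observing it weakly dominates in utility, so feasibility and optimality are preserved. Your route instead reduces to a moment problem and reads the threshold structure off the KKT conditions. Both of your first reductions (replacing \(U_i\) by \(\hat U_i\) via the tower property, then regressing \(A_i\) onto \(\hat U_i\)) are correct and the first is exactly the step the paper uses in Lemma~\ref{lem:main} (``\(\EE[A_i U_i] = \EE[A_i \hat U_i]\)''). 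Your approach buys something the paper's does not: an interpretable formula \(t_j = \lambda \ca / (1 + \gamma_j)\) exhibiting the group thresholds as shadow prices, which explains \emph{why} diversity constraints lower the effective bar for constrained groups and motivates the grid search over \((t_j, \alpha_j)\) in Section~\ref{sec:solution}. The paper's argument buys robustness: it needs no constraint qualification and handles degenerate instances uniformly.

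The one genuine gap is the assertion that strong duality ``holds automatically given a finite optimum.'' That is true for polyhedral linear programs, but your reduced problem is a linear functional maximized over the \emph{moment body} \(\{(\EE[f_i(\hat U_i)], \EE[\hat U_i f_i(\hat U_i)])_i : 0 \le f_i \le 1\}\) intersected with half-spaces, and that body is a polytope only when each \(D_i\) has finite support. For general (e.g., continuous) screening distributions you are maximizing a linear objective over a non-polyhedral convex compact set, and the existence of finite multipliers \((\lambda, \gamma)\) requires a constraint qualification such as a Slater point in the relative interior. This fails precisely in the degenerate but admissible case where some \(\Lambda_j\) equals the maximum utility achievable on \(G_j\) (the value function has infinite slope in \(\Lambda_j\), so \(\gamma_j\) would have to be infinite); the conclusion still holds there---the forced policy is the threshold policy with \(t_j = 0\)---but your argument does not deliver it without a separate case analysis. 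Relatedly, your threshold form \(\lambda \ca/(1+\gamma_j)\) is always nonnegative, so you should confirm (as is true here, since \(\Lambda_j \ge 0\)) that negative thresholds are never needed. With a Slater-type hypothesis made explicit, or with the boundary cases split off and handled directly, the proof goes through; as written, the duality step is asserted rather than established.
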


To see this, suppose that \(\C P^*\) deterministically selects a subset of applicants to screen.
Then, for each group \(G_j\), it is clear one should allocate loans to the approximately \(k_j\) individuals in each group with the highest post-screening estimated utility, where \(k_j\) is the number of loans granted to each group under \(\C A^*\).
Such a rank-based allocation can 
equivalently be written as a threshold rule
with the same expected utility.
The more general case, in which screening decisions are randomized, introduces some technical complications, but the spirit of the argument is similar.
Full details can be found in the online appendix.

Now, 
given a threshold policy \(\C T\) with  thresholds \(t_1, \dots, t_m  \in \B R\) and boundary randomization probabilities \(\alpha_1, \dots, \alpha_m \in [0,1]\),
we turn to finding an optimal companion screening policy. 
We show that such an optimal screening policy \(\C P\) can be found by solving an LP.
In particular, the LP has \(n\) decision variables \(p_1, \dots, p_n\), with \(p_i \in [0,1]\) specifying the probability that applicant \(i\) is screened;
the objective equals the utility of the 
combined screening and allocation policy;
and the constraints encode our budget and diversity conditions.

To construct the LP, we first define the following quantities that depend on both the threshold rule \(\C T\) and the lender's prior knowledge on the value of screening:
    \begin{align*}
        q_i &= \Pr(D_i > t_{g_i}),\\
        e_i &= \EE[U_i \mid D_i > t_{g_i}],\\
        o_i &= \B 1_{\mu_i > t_{g_i}} + \alpha_{g_i} \cdot \B 1_{\mu_i = t_{g_i}},
    \end{align*}
where \(g_i\) denotes applicant \(i\)'s group membership. Recall that 
\(\mu_i\) denotes pre-screening expected utility
and \(D_i\) denotes post-screening expected utility 
if applicant \(i\) is screened. 
Both \(\mu_i\) and the distribution of \(D_i\) are known in advance.

For fixed screening probabilities \(p = (p_1, \dots p_n)\), the expected utility of the corresponding 
screening and allocation policy
can now be expressed as a linear function of \(p\):
    \begin{equation}
    \label{eq:obj}
        \sum_{i = 1}^n \left[ q_i \cdot e_i \cdot p_i + o_i \cdot \mu_i \cdot (1 - p_i) \right].
    \end{equation}
The first summand reflects the expected utility associated with applicant \(i\) if they were screened,
and the second summand reflects the expected utility if they were not screened.

Our goal is to maximize the expression in Eq.~\eqref{eq:obj} subject to the budget and diversity conditions, which we now show can also be expressed as linear constraints on \(p\).
In terms of the constants defined above,
the budget constraint can be written as
\begin{equation}
    \label{eq:constr1}
        \sum_{i = 1}^n \left[ \cs \cdot p_i + \ca \cdot q_i \cdot p_i + \ca \cdot o_i \cdot (1-p_i) \right] \leq B.
    \end{equation}
Likewise, the \(j\) diversity constraints in Eq.~\eqref{eq:def3} become
    \begin{equation}
    \label{eq:constr2}
        \sum_{i \in G_j} \left[ q_i \cdot e_i \cdot p_i + o_i \cdot \mu_i \cdot (1 - p_i) \right] \geq \Lambda_j,
    \end{equation}
for \(j = 1, \ldots, m\).

Together, the objective 
given by Eq.~\eqref{eq:obj},
with constraints defined by Eqs.~\eqref{eq:constr1} and \eqref{eq:constr2},
define a linear program with decision variables $\{p_i\}$, the solution of which---if one exists--- gives a screening policy \(\C P^*\) that is optimal for the threshold policy \(\C T\).
A jointly optimal screening and allocation policy
\((\C P^*, \C T^*)\) can accordingly be found through a grid search
over all threshold policies \(\C T\) in a (discretized) space \(\B R^m \times [0, 1]^m\).
For each choice of policy \(\C T\), defined by threshold and randomization parameters \(t_1, \ldots, t_m\) and \(\alpha_1, \ldots, \alpha_m\), we find the corresponding optimal screening policy \(\C P\) by solving the LP described above.
Then, among the resulting screening-allocation pairs,
the one 
maximizing utility is guaranteed to be globally optimal.

\section{Experiments}
We now investigate the value of our sequential screening and allocation approach through two simulation exercises, one based on synthetic data and another based on real loan data.
First, with the synthetic data, we examine the optimal policies as we vary both the cost of screening and the value of the resulting information.
Then, with the real-world data, we illustrate how a decision maker could, in practice, operationalize our screening and allocation approach to pursue equity when distributing limited resources.

In each experiment, we allocate loans to members of two groups.  
One of the groups---which we call the ``targeted'' group---can be screened for more information at some cost, and we imagine there is social value to distributing more resources to this group.
For example, the targeted group may be comprised of those who traditionally have not had ready access to the banking system and accordingly have limited formal credit history.
For simplicity, we further assume that those in the non-targeted group cannot be screened, perhaps because they already have complete credit histories.

We define the utility of lending
to applicant \(i\) to be
    \begin{equation}
    \label{eq:exp-utility}
        U_i = \begin{cases}
            a & Y_i = 1,\\
            b & Y_i = 0,
        \end{cases}
    \end{equation}
where \(Y_i \in \{0,1\}\) indicates whether they would pay back the loan,
and \(a\) and \(b\) are fixed, known constants.
In both of our experiments, we set \(a = \$1{,}000\) and \(b=-\$200\),
meaning there is \(\$1{,}000\) of social utility when an applicant receives and pays back a loan and \(-\$200\) utility when an applicant defaults on a loan, for example because defaulting could trigger further financial distress. As above, we imagine the lender is a government institution or other agent attempting to maximize total social utility.
We further assume that the pre-screening covariate \(x_i \in [0,1]\) specifies the lender's pre-screening estimate of applicant \(i\)'s likelihood to repay a loan;
in other words, \(x_i = \Pr(Y_i = 1 \mid X_i = x_i)\).
The covariate \(x_i\) can thus be translated into (pre-screening) expected utility \(\mu_i\) by Eq.~\eqref{eq:exp-utility}:
    \begin{equation*}
        \EE [U_i \mid X_i = x_i] = a x_i + b(1 - x_i).
    \end{equation*}

\begin{figure}[t]
\centering
\includegraphics[width=6.5cm]{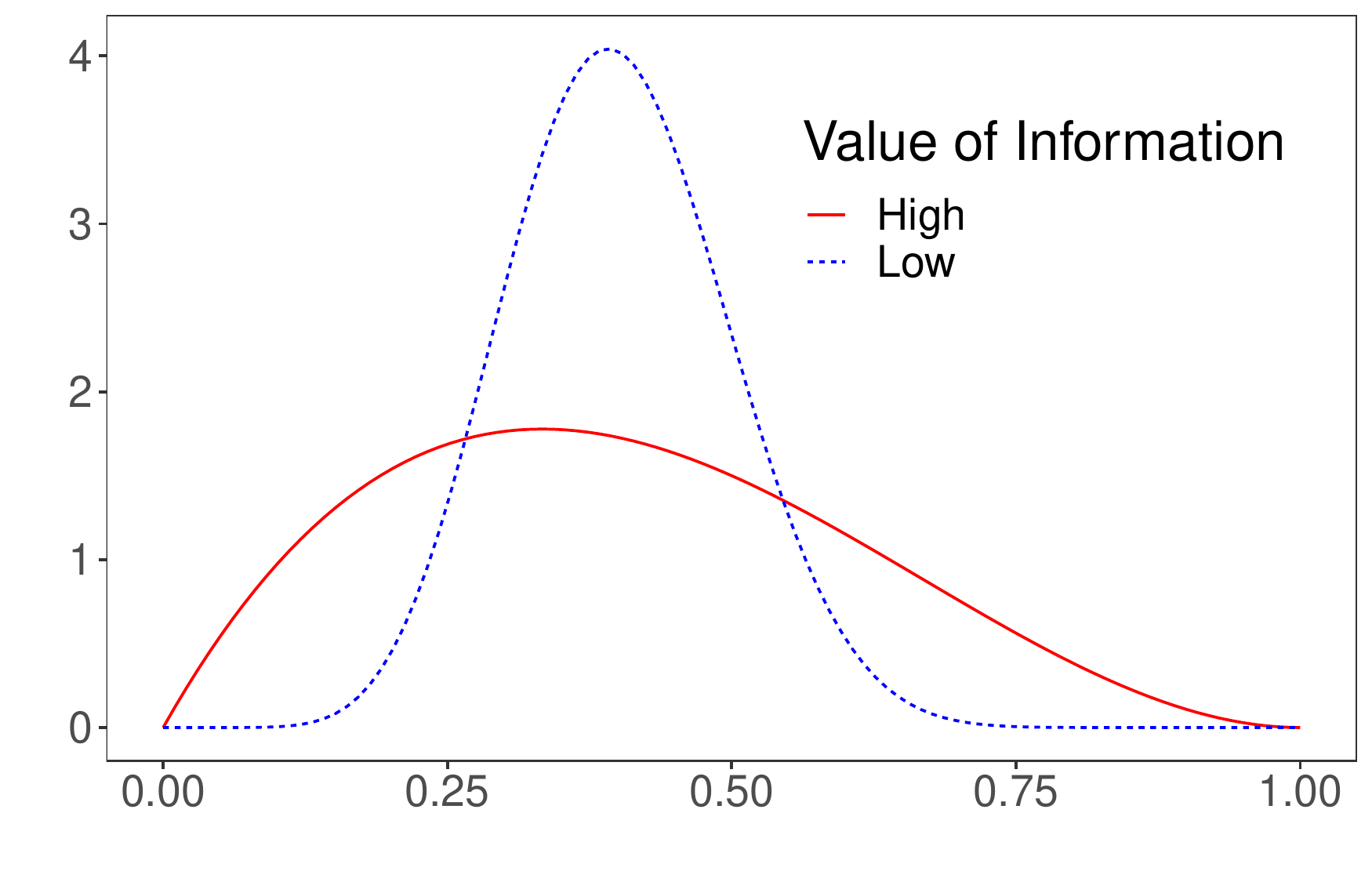}
\caption{For an applicant with pre-screening likelihood of being creditworthy \(x_i = 0.4\), the distribution of post-screening creditworthiness, 
\(\Pr[Y_i = 1 \mid X_i = 0.4, \tilde{X}_i ]\).  
Under a regime with high value of information (red line), the post-screening distribution has higher variance than in a setting with low value of information (blue line).}
\label{fig:beta}
\end{figure}

\subsection{Synthetic data}\label{sec:synthetic}
We illustrate the value of screening in four regimes of low vs.\ high cost of information paired with low vs.\ high value of information. 
To do so, we created four synthetic datasets, each comprised of \(n = 500\) individuals.

For all four datasets, we first evenly split the population into targeted and non-targeted groups. 
For each applicant in the targeted group, we generated their pre-screening probability of repayment \(x_i\)
(or, equivalently, their observed pre-screening covariates) by independently drawing from a beta distribution%
\footnote{We use the beta as it is a familiar, easily-parameterized function with support on $[0, 1]$.} 
with mean 0.5 and count parameter 50.\footnote{
    In terms of the \(\alpha\) and \(\beta\) shape parameters often used to parameterize beta distributions,
    the mean is \(\alpha/(\alpha + \beta)\) and the count parameter is \(\alpha + \beta\). A higher count corresponds to a lower variance.
}
Similarly, for each applicant in the non-targeted group, we generated their pre-screening probability of repayment
by independently drawing from a beta distribution with mean 0.70 and count parameter 50.
The higher mean repayment probability for members of the non-targeted group corresponds to them being, on average, more creditworthy.

Now, for each applicant in the targeted group, the lender may elect to screen them. 
As a result of screening, the lender receives an improved estimate \(\tilde{x}_i\) of the applicant's repayment probability, so that:
\begin{align*}
\EE [U_i \mid X_i = x_i, \tilde{X}_i = \tilde{x}_i] = a \tilde{x}_i + b(1 - \tilde{x}_i).
\end{align*}
We assume \(\tilde{x}_i\) is
drawn from a beta distribution with mean \(x_i\)
(i.e., the lender's pre-screening estimate of the applicant's repayment probability).
In the high-information scenario, we set the 
count parameter for this beta distribution to be 5; 
and in the low-information scenario, we set it equal to 25.

Figure~\ref{fig:beta} shows these two post-screening information distributions for an applicant with \(x_i = 0.4\).
As illustrated in the plot, the high-information distribution (red line) has higher variance than the low-information distribution (blue line), and so screening is more likely to reveal very high risk and very low risk applicants in the high-information setting.
Finally, we set the cost of screening \(\cs\)
to be \(\$25\) in the low-cost scenario and \(\$100\) in the high-cost scenario, the cost of a loan to be $\ca = \$1{,}000$, and the total budget to be \(B = \$50{,}000\).

\begin{figure}[t]
    \centering
    \includegraphics[width=\columnwidth]{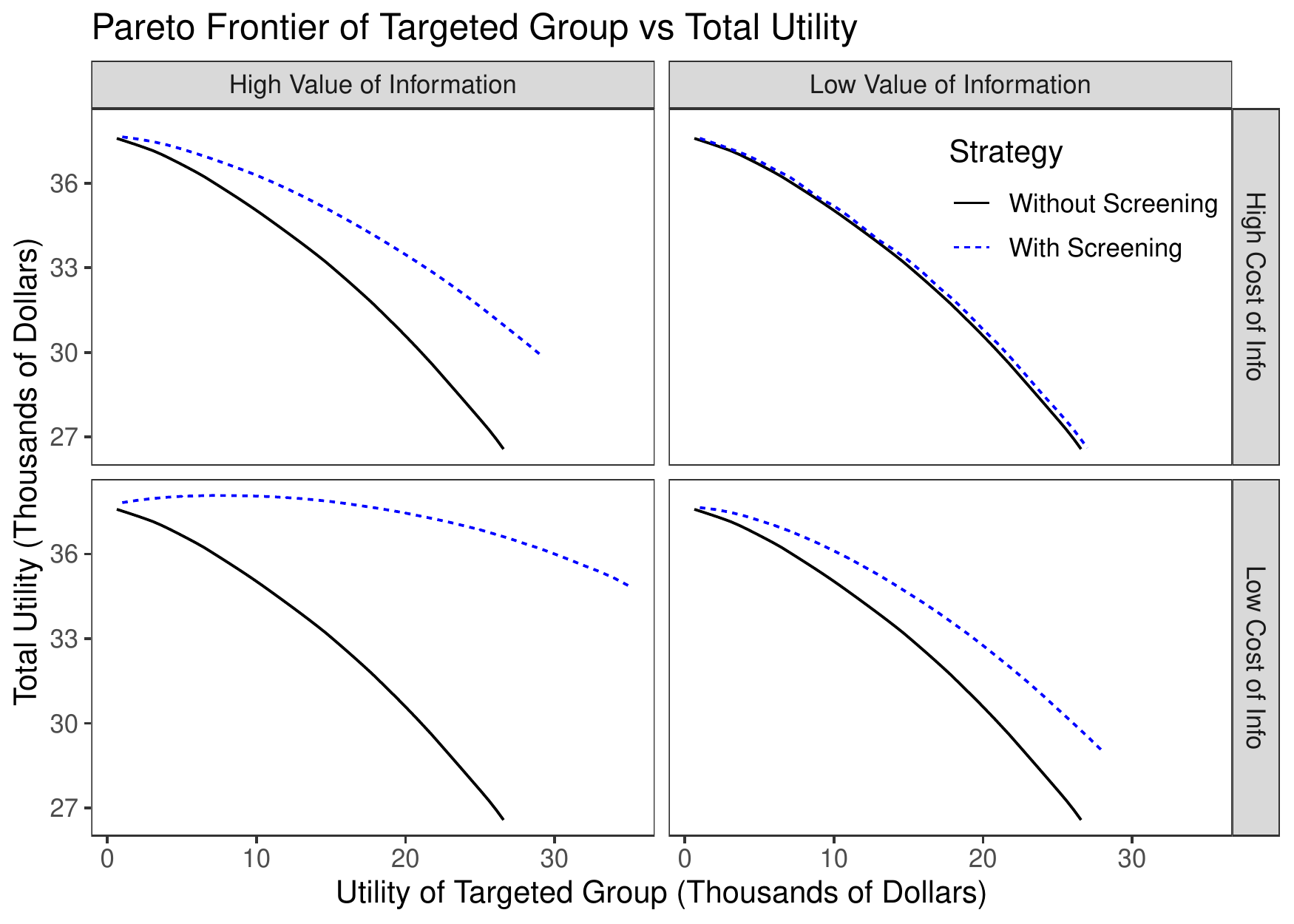}
    \caption{Comparison of our optimal screening strategy (blue line) with one without screening (black line) in four regimes with different costs and values of information.  
}
\label{fig:4b4}
\end{figure}

Given these four datasets, we computed the optimal screening and allocation strategies using the algorithm described above.\footnote{Because only one group can be screened in our experiments, we use a faster variant of our optimization algorithm, described in
the online appendix.
For any fixed set of parameters, this approach returns the optimal screening and allocation policy within a few seconds with the open-source LP solver SCS~\cite{o2016conic}.}
In our original problem formulation, the diversity constraint in Eq.~\eqref{eq:def3} specified only that we lower bound the utility of each group.
To better understand the impact of diversity on utility, we modify this constraint to be a strict equality for the utility of the targeted group and set the lower bound on utility to be \(\$0\) for the non-targeted group.
Thus, across a range of exactly satisfied utilities for the targeted group, we find the strategy that maximizes overall utility.

Figure~\ref{fig:4b4} shows the results of this analysis, with the blue lines tracing out the Pareto frontiers for each of the four scenarios we consider.
For comparison, the black lines show the corresponding result under a strategy that does not screen any applicants. 
Specifically, for any fixed utility constraint on the targeted group, the optimal no-screening policy first allocates loans to the $k$ individuals in the targeted group most likely to repay based on their pre-screening estimates $x_i$, where $k$ is chosen to satisfy the utility constraint; 
and then any remaining budget is used to allocate loans to those in the non-targeted group most likely to repay.

When either the value of information is high (left column) or the cost of screening is low (bottom row), we find that screening can be a valuable tool to improve utility.
Notably, screening plays a more important role in these examples as we demand greater utility be allocated to the targeted group,
since the no-screening strategy ends up dispersing loans to relatively high-risk applicants in the targeted group even though more creditworthy applicants in that group could be identified for little marginal cost.
As one might expect, the gap between the screening and no-screening strategies is particularly large when both information is valuable and cheap.
Indeed, in the high-value, low-cost setting (lower-left panel), one can achieve substantial diversity with little drop in overall utility.

\subsection{Empirical credit data}\label{sec:empirical}
We next apply our approach to the
German Credit Dataset~\cite{hofmann1994statlog},
which includes a variety of individual-level socioeconomic and financial characteristics (e.g., age, employment status, and credit history) on a sample of 
\(n = 1{,}000\) people, of whom 700 are deemed creditworthy.
We define the targeted group to be those who currently do not own their residence, a subpopulation that comprises 28\% of the dataset. In this case, 60\% of individuals in the 
targeted group are creditworthy compared to 74\% in the non-targeted group. 
For members of the targeted group, we assume the lender, prior to screening, only knows the targeted group's overall base rate of creditworthiness.
If, however, the lender chooses to screen an applicant in the targeted group, they learn \(\tilde{x}_i\), the applicant's likelihood of being creditworthy conditional on all the available features in the dataset.
For members of the non-targeted group, we assume 
this full estimate of creditworthiness is available prior to screening, and that there is no opportunity to obtain additional information.\footnote{More specifically, at the start of this exercise, we train a logistic regression model on the full dataset predicting creditworthiness as a function of the available covariates.
Then, for members of the targeted group, 
\(\tilde{x}_i\) is the model-estimated probability of creditworthiness for applicant \(i\);
and for the members of the non-targeted group,
that same model estimate is available prior to screening.
}
Finally, we translate estimates of creditworthiness to estimates of utility via Eq.~\eqref{eq:exp-utility}, 
in line with our simulations above.

Figure~\ref{fig:uci} shows the result of applying our 
screening and allocation algorithm to this dataset, 
where we assume the cost of screening \(\cs\) is  \(\$100\) (equal to our high cost regime in the synthetic datasets), the cost of a loan \(\ca\) is  \(\$1{,}000\),
and the total budget is \(B = \$150{,}000\).
Like before, we compare the Pareto frontier of our approach (blue line) to that of
a naive policy in which the lender does not screen applicants (black line).
As with the synthetic datasets above, we
find that the optimal policies with screening substantially outperform those without screening, particularly when we enforce a diversity constraint.
For example, when we require \(\$50{,}000\) of utility to come from allocating loans to the targeted group, the maximum total utility under the no-screening policy is $\$102{,}000$, compared to $\$119{,}000$ under the screening policy, an increase of 17\%.

\begin{figure}[t]
\centering
\includegraphics[width=.99\columnwidth]{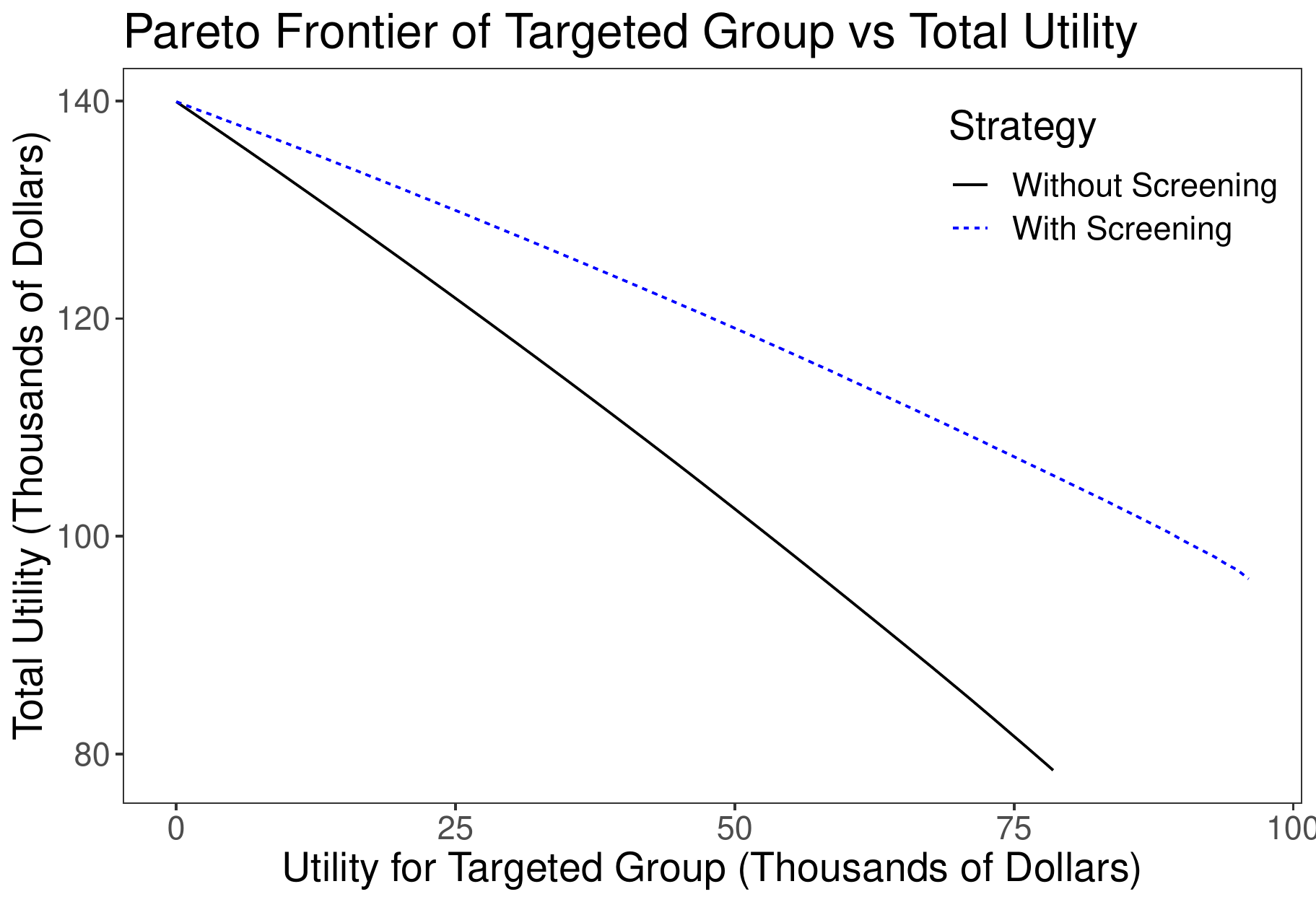}
\caption{For the German Credit Dataset, 
comparison of our optimal screening strategy (blue line) with one without screening (black line).
}
\label{fig:uci}
\end{figure}

\section{Discussion}
Many creditworthy individuals often have difficulty gaining access to traditional credit markets due to lack of formal financial histories,
an issue that can exacerbate existing socioeconomic disparities. 
To address this gap, we developed a simple and efficient algorithm for a budget-constrained decision maker to screen applicants and then allocate a limited resource,
an approach that we find offers substantial benefits on
both real and synthetic lending datasets.
This joint screening-plus-allocation approach is especially useful in settings where a targeted subset of the population---those most in need of an intervention---are also those for whom the least information is available \emph{a priori}, a common situation in many social welfare programs.

Past research has shown that a dearth of high-quality data for various subgroups of the population can lead to poor models in a variety of domains~\cite{gebru2018datasheets,olteanu2019social}, 
including 
text analysis~\cite{bolukbasi2016man,caliskan2017semantics,garg2018word}, 
facial recognition~\cite{buolamwini2018gender}, 
and automated hiring~\cite{dastin2018}. 
Looking forward, our combined data acquisition and decision-making approach provides one
framework to address this challenge by 
jointly modeling the cost of data collection and the value for subsequent improvements in downstream decisions.

\section*{Acknowledgements}
We thank Sam Corbett-Davies, Bernardo Garc{\'i}a Bulle Bueno, and Zhiyuan Jerry Lin for helpful comments and discussions. 
The online appendix is available at: \url{https://arxiv.org/abs/1911.02715}.

\bibliographystyle{ACM-Reference-Format}
\bibliography{sample-base}


\begin{thebibliography}{27}


\ifx \showCODEN    \undefined \def \showCODEN     #1{\unskip}     \fi
\ifx \showDOI      \undefined \def \showDOI       #1{#1}\fi
\ifx \showISBNx    \undefined \def \showISBNx     #1{\unskip}     \fi
\ifx \showISBNxiii \undefined \def \showISBNxiii  #1{\unskip}     \fi
\ifx \showISSN     \undefined \def \showISSN      #1{\unskip}     \fi
\ifx \showLCCN     \undefined \def \showLCCN      #1{\unskip}     \fi
\ifx \shownote     \undefined \def \shownote      #1{#1}          \fi
\ifx \showarticletitle \undefined \def \showarticletitle #1{#1}   \fi
\ifx \showURL      \undefined \def \showURL       {\relax}        \fi
\providecommand\bibfield[2]{#2}
\providecommand\bibinfo[2]{#2}
\providecommand\natexlab[1]{#1}
\providecommand\showeprint[2][]{arXiv:#2}

\bibitem[\protect\citeauthoryear{{Apaam et al.}}{{Apaam et al.}}{2018}]%
        {fdic2018}
\bibfield{author}{\bibinfo{person}{{Apaam et al.}}}
  \bibinfo{year}{2018}\natexlab{}.
\newblock \showarticletitle{2017 {F}{D}{I}{C} {N}ational survey of unbanked and
  underbanked households}.
\newblock \bibinfo{journal}{\emph{US Federal Deposit Insurance Corporation}}
  (\bibinfo{year}{2018}).
\newblock


\bibitem[\protect\citeauthoryear{Bakker, Noriega-Campero, Tu, Sattigeri, and
  Varshney}{Bakker et~al\mbox{.}}{2019}]%
        {bakker2019fairness}
\bibfield{author}{\bibinfo{person}{Michiel~A Bakker},
  \bibinfo{person}{Alejandro Noriega-Campero}, \bibinfo{person}{Duy~Patrick
  Tu}, \bibinfo{person}{Prasanna Sattigeri}, {and} \bibinfo{person}{Kush~R
  Varshney}.} \bibinfo{year}{2019}\natexlab{}.
\newblock \showarticletitle{On Fairness in Budget-Constrained Decision Making}.
  In \bibinfo{booktitle}{\emph{KDD ’19: Explainable AI (XAI)}}.
\newblock


\bibitem[\protect\citeauthoryear{Bertsimas and Tsitsiklis}{Bertsimas and
  Tsitsiklis}{1997}]%
        {bertsimas1997introduction}
\bibfield{author}{\bibinfo{person}{Dimitris Bertsimas} {and}
  \bibinfo{person}{John~N Tsitsiklis}.} \bibinfo{year}{1997}\natexlab{}.
\newblock \bibinfo{booktitle}{\emph{Introduction to linear optimization}}.
  Vol.~\bibinfo{volume}{6}.
\newblock \bibinfo{publisher}{Athena Scientific Belmont, MA}.
\newblock


\bibitem[\protect\citeauthoryear{Bolukbasi, Chang, Zou, Saligrama, and
  Kalai}{Bolukbasi et~al\mbox{.}}{2016}]%
        {bolukbasi2016man}
\bibfield{author}{\bibinfo{person}{Tolga Bolukbasi}, \bibinfo{person}{Kai-Wei
  Chang}, \bibinfo{person}{James~Y Zou}, \bibinfo{person}{Venkatesh Saligrama},
  {and} \bibinfo{person}{Adam~T Kalai}.} \bibinfo{year}{2016}\natexlab{}.
\newblock \showarticletitle{Man is to computer programmer as woman is to
  homemaker? {D}ebiasing word embeddings}. In
  \bibinfo{booktitle}{\emph{Advances in {N}eural {I}nformation {P}rocessing
  {S}ystems}}. \bibinfo{pages}{4349--4357}.
\newblock


\bibitem[\protect\citeauthoryear{Brams and Taylor}{Brams and Taylor}{1996}]%
        {brams1996fair}
\bibfield{author}{\bibinfo{person}{Steven~J Brams} {and}
  \bibinfo{person}{Alan~D Taylor}.} \bibinfo{year}{1996}\natexlab{}.
\newblock \bibinfo{booktitle}{\emph{Fair Division: From cake-cutting to dispute
  resolution}}.
\newblock \bibinfo{publisher}{Cambridge University Press}.
\newblock


\bibitem[\protect\citeauthoryear{Buolamwini and Gebru}{Buolamwini and
  Gebru}{2018}]%
        {buolamwini2018gender}
\bibfield{author}{\bibinfo{person}{Joy Buolamwini} {and}
  \bibinfo{person}{Timnit Gebru}.} \bibinfo{year}{2018}\natexlab{}.
\newblock \showarticletitle{Gender shades: Intersectional accuracy disparities
  in commercial gender classification}. In \bibinfo{booktitle}{\emph{Conference
  on fairness, accountability and transparency}}. \bibinfo{pages}{77--91}.
\newblock


\bibitem[\protect\citeauthoryear{Caliskan, Bryson, and Narayanan}{Caliskan
  et~al\mbox{.}}{2017}]%
        {caliskan2017semantics}
\bibfield{author}{\bibinfo{person}{Aylin Caliskan}, \bibinfo{person}{Joanna~J
  Bryson}, {and} \bibinfo{person}{Arvind Narayanan}.}
  \bibinfo{year}{2017}\natexlab{}.
\newblock \showarticletitle{Semantics derived automatically from language
  corpora contain human-like biases}.
\newblock \bibinfo{journal}{\emph{Science}} \bibinfo{volume}{356},
  \bibinfo{number}{6334} (\bibinfo{year}{2017}), \bibinfo{pages}{183--186}.
\newblock


\bibitem[\protect\citeauthoryear{Chouldechova}{Chouldechova}{2017}]%
        {chouldechova2016fair}
\bibfield{author}{\bibinfo{person}{Alexandra Chouldechova}.}
  \bibinfo{year}{2017}\natexlab{}.
\newblock \showarticletitle{Fair prediction with disparate impact: A study of
  bias in recidivism prediction instruments}.
\newblock \bibinfo{journal}{\emph{Big Data}} \bibinfo{volume}{5},
  \bibinfo{number}{2} (\bibinfo{year}{2017}), \bibinfo{pages}{153--163}.
\newblock


\bibitem[\protect\citeauthoryear{Corbett-Davies and Goel}{Corbett-Davies and
  Goel}{2018}]%
        {corbett2018measure}
\bibfield{author}{\bibinfo{person}{Sam Corbett-Davies} {and}
  \bibinfo{person}{Sharad Goel}.} \bibinfo{year}{2018}\natexlab{}.
\newblock \showarticletitle{The Measure and Mismeasure of Fairness: A Critical
  Review of Fair Machine Learning}.
\newblock \bibinfo{journal}{\emph{arXiv preprint arXiv:1808.00023}}
  (\bibinfo{year}{2018}).
\newblock


\bibitem[\protect\citeauthoryear{Corbett-Davies, Pierson, Feller, Goel, and
  Huq}{Corbett-Davies et~al\mbox{.}}{2017}]%
        {corbett2017}
\bibfield{author}{\bibinfo{person}{Sam Corbett-Davies}, \bibinfo{person}{Emma
  Pierson}, \bibinfo{person}{Avi Feller}, \bibinfo{person}{Sharad Goel}, {and}
  \bibinfo{person}{Aziz Huq}.} \bibinfo{year}{2017}\natexlab{}.
\newblock \showarticletitle{Algorithmic decision making and the cost of
  fairness}. In \bibinfo{booktitle}{\emph{Proceedings of the 23rd ACM SIGKDD
  International Conference on Knowledge Discovery and Data Mining}}. ACM,
  \bibinfo{pages}{797--806}.
\newblock


\bibitem[\protect\citeauthoryear{Dantzig}{Dantzig}{1957}]%
        {dantzig1957discrete}
\bibfield{author}{\bibinfo{person}{George~B Dantzig}.}
  \bibinfo{year}{1957}\natexlab{}.
\newblock \showarticletitle{Discrete-variable extremum problems}.
\newblock \bibinfo{journal}{\emph{Operations research}} \bibinfo{volume}{5},
  \bibinfo{number}{2} (\bibinfo{year}{1957}), \bibinfo{pages}{266--288}.
\newblock


\bibitem[\protect\citeauthoryear{Dastin}{Dastin}{2018}]%
        {dastin2018}
\bibfield{author}{\bibinfo{person}{Jeffrey Dastin}.}
  \bibinfo{year}{2018}\natexlab{}.
\newblock \bibinfo{title}{Amazon scraps secret AI recruiting tool that showed
  bias against women}.
\newblock
\newblock
\urldef\tempurl%
\url{https://www.reuters.com/article/us-amazon-com-jobs-automation-insight/amazon-scraps-secret-ai-recruiting-tool-that-showed-bias-against-women-idUSKCN1MK08G}
\showURL{%
\tempurl}


\bibitem[\protect\citeauthoryear{Elzayn, Jabbari, Jung, Kearns, Neel, Roth, and
  Schutzman}{Elzayn et~al\mbox{.}}{2019}]%
        {elzayn2019fair}
\bibfield{author}{\bibinfo{person}{Hadi Elzayn}, \bibinfo{person}{Shahin
  Jabbari}, \bibinfo{person}{Christopher Jung}, \bibinfo{person}{Michael
  Kearns}, \bibinfo{person}{Seth Neel}, \bibinfo{person}{Aaron Roth}, {and}
  \bibinfo{person}{Zachary Schutzman}.} \bibinfo{year}{2019}\natexlab{}.
\newblock \showarticletitle{Fair algorithms for learning in allocation
  problems}. In \bibinfo{booktitle}{\emph{Proceedings of the Conference on
  Fairness, Accountability, and Transparency}}. ACM, \bibinfo{pages}{170--179}.
\newblock


\bibitem[\protect\citeauthoryear{Garg, Schiebinger, Jurafsky, and Zou}{Garg
  et~al\mbox{.}}{2018}]%
        {garg2018word}
\bibfield{author}{\bibinfo{person}{Nikhil Garg}, \bibinfo{person}{Londa
  Schiebinger}, \bibinfo{person}{Dan Jurafsky}, {and} \bibinfo{person}{James
  Zou}.} \bibinfo{year}{2018}\natexlab{}.
\newblock \showarticletitle{Word embeddings quantify 100 years of gender and
  ethnic stereotypes}.
\newblock \bibinfo{journal}{\emph{Proceedings of the National Academy of
  Sciences}} \bibinfo{volume}{115}, \bibinfo{number}{16}
  (\bibinfo{year}{2018}), \bibinfo{pages}{E3635--E3644}.
\newblock


\bibitem[\protect\citeauthoryear{Gebru, Morgenstern, Vecchione, Vaughan,
  Wallach, Daume{\'e}~III, and Crawford}{Gebru et~al\mbox{.}}{2018}]%
        {gebru2018datasheets}
\bibfield{author}{\bibinfo{person}{Timnit Gebru}, \bibinfo{person}{Jamie
  Morgenstern}, \bibinfo{person}{Briana Vecchione},
  \bibinfo{person}{Jennifer~Wortman Vaughan}, \bibinfo{person}{Hanna Wallach},
  \bibinfo{person}{Hal Daume{\'e}~III}, {and} \bibinfo{person}{Kate Crawford}.}
  \bibinfo{year}{2018}\natexlab{}.
\newblock \showarticletitle{Datasheets for datasets}.
\newblock \bibinfo{journal}{\emph{arXiv preprint arXiv:1803.09010}}
  (\bibinfo{year}{2018}).
\newblock


\bibitem[\protect\citeauthoryear{Hardt, Price, and Srebro}{Hardt
  et~al\mbox{.}}{2016}]%
        {hardt2016}
\bibfield{author}{\bibinfo{person}{Moritz Hardt}, \bibinfo{person}{Eric Price},
  {and} \bibinfo{person}{Nati Srebro}.} \bibinfo{year}{2016}\natexlab{}.
\newblock \showarticletitle{Equality of opportunity in supervised learning}. In
  \bibinfo{booktitle}{\emph{Advances In Neural Information Processing
  Systems}}. \bibinfo{pages}{3315--3323}.
\newblock


\bibitem[\protect\citeauthoryear{Hofmann}{Hofmann}{1994}]%
        {hofmann1994statlog}
\bibfield{author}{\bibinfo{person}{Hans Hofmann}.}
  \bibinfo{year}{1994}\natexlab{}.
\newblock \showarticletitle{Statlog ({German Credit Data}) Data Set}.
\newblock \bibinfo{journal}{\emph{UCI Repository of Machine Learning
  Databases}} (\bibinfo{year}{1994}).
\newblock


\bibitem[\protect\citeauthoryear{Kleinberg, Mullainathan, and
  Raghavan}{Kleinberg et~al\mbox{.}}{2017}]%
        {kleinberg2016inherent}
\bibfield{author}{\bibinfo{person}{Jon Kleinberg}, \bibinfo{person}{Sendhil
  Mullainathan}, {and} \bibinfo{person}{Manish Raghavan}.}
  \bibinfo{year}{2017}\natexlab{}.
\newblock \showarticletitle{Inherent trade-offs in the fair determination of
  risk scores}. In \bibinfo{booktitle}{\emph{Proceedings of Innovations in
  Theoretical Computer Science (ITCS)}}.
\newblock


\bibitem[\protect\citeauthoryear{Liu, Dean, Rolf, Simchowitz, and Hardt}{Liu
  et~al\mbox{.}}{2018}]%
        {liu2018delayed}
\bibfield{author}{\bibinfo{person}{Lydia~T Liu}, \bibinfo{person}{Sarah Dean},
  \bibinfo{person}{Esther Rolf}, \bibinfo{person}{Max Simchowitz}, {and}
  \bibinfo{person}{Moritz Hardt}.} \bibinfo{year}{2018}\natexlab{}.
\newblock \showarticletitle{Delayed impact of fair machine learning}.
\newblock \bibinfo{journal}{\emph{arXiv preprint arXiv:1803.04383}}
  (\bibinfo{year}{2018}).
\newblock


\bibitem[\protect\citeauthoryear{Melville, Saar-Tsechansky, Provost, and
  Mooney}{Melville et~al\mbox{.}}{2004}]%
        {melville2004active}
\bibfield{author}{\bibinfo{person}{Prem Melville}, \bibinfo{person}{Maytal
  Saar-Tsechansky}, \bibinfo{person}{Foster Provost}, {and}
  \bibinfo{person}{Raymond Mooney}.} \bibinfo{year}{2004}\natexlab{}.
\newblock \showarticletitle{Active feature-value acquisition for classifier
  induction}. In \bibinfo{booktitle}{\emph{Fourth IEEE International Conference
  on Data Mining (ICDM'04)}}. IEEE, \bibinfo{pages}{483--486}.
\newblock


\bibitem[\protect\citeauthoryear{Melville, Saar-Tsechansky, Provost, and
  Mooney}{Melville et~al\mbox{.}}{2005}]%
        {melville2005expected}
\bibfield{author}{\bibinfo{person}{Prem Melville}, \bibinfo{person}{Maytal
  Saar-Tsechansky}, \bibinfo{person}{Foster Provost}, {and}
  \bibinfo{person}{Raymond Mooney}.} \bibinfo{year}{2005}\natexlab{}.
\newblock \showarticletitle{An expected utility approach to active
  feature-value acquisition}. In \bibinfo{booktitle}{\emph{Fifth IEEE
  International Conference on Data Mining (ICDM'05)}}. IEEE,
  \bibinfo{pages}{4--pp}.
\newblock


\bibitem[\protect\citeauthoryear{Moulin}{Moulin}{2004}]%
        {moulin2004fair}
\bibfield{author}{\bibinfo{person}{Herv{\'e} Moulin}.}
  \bibinfo{year}{2004}\natexlab{}.
\newblock \bibinfo{booktitle}{\emph{Fair division and collective welfare}}.
\newblock \bibinfo{publisher}{MIT press}.
\newblock


\bibitem[\protect\citeauthoryear{Noriega-Campero, Bakker, Garcia-Bulle, and
  Pentland}{Noriega-Campero et~al\mbox{.}}{2019}]%
        {noriega2019active}
\bibfield{author}{\bibinfo{person}{Alejandro Noriega-Campero},
  \bibinfo{person}{Michiel~A Bakker}, \bibinfo{person}{Bernardo Garcia-Bulle},
  {and} \bibinfo{person}{Alex'Sandy' Pentland}.}
  \bibinfo{year}{2019}\natexlab{}.
\newblock \showarticletitle{Active Fairness in Algorithmic Decision Making}. In
  \bibinfo{booktitle}{\emph{Proceedings of the 2019 AAAI/ACM Conference on AI,
  Ethics, and Society}}. ACM, \bibinfo{pages}{77--83}.
\newblock


\bibitem[\protect\citeauthoryear{O'Donoghue, Chu, Parikh, and Boyd}{O'Donoghue
  et~al\mbox{.}}{2016}]%
        {o2016conic}
\bibfield{author}{\bibinfo{person}{Brendan O'Donoghue}, \bibinfo{person}{Eric
  Chu}, \bibinfo{person}{Neal Parikh}, {and} \bibinfo{person}{Stephen Boyd}.}
  \bibinfo{year}{2016}\natexlab{}.
\newblock \showarticletitle{Conic optimization via operator splitting and
  homogeneous self-dual embedding}.
\newblock \bibinfo{journal}{\emph{Journal of Optimization Theory and
  Applications}} \bibinfo{volume}{169}, \bibinfo{number}{3}
  (\bibinfo{year}{2016}), \bibinfo{pages}{1042--1068}.
\newblock


\bibitem[\protect\citeauthoryear{Olteanu, Castillo, Diaz, and Kiciman}{Olteanu
  et~al\mbox{.}}{2019}]%
        {olteanu2019social}
\bibfield{author}{\bibinfo{person}{Alexandra Olteanu}, \bibinfo{person}{Carlos
  Castillo}, \bibinfo{person}{Fernando Diaz}, {and} \bibinfo{person}{Emre
  Kiciman}.} \bibinfo{year}{2019}\natexlab{}.
\newblock \showarticletitle{Social data: Biases, methodological pitfalls, and
  ethical boundaries}.
\newblock \bibinfo{journal}{\emph{Frontiers in Big Data}}  \bibinfo{volume}{2}
  (\bibinfo{year}{2019}), \bibinfo{pages}{13}.
\newblock


\bibitem[\protect\citeauthoryear{Saar-Tsechansky, Melville, and
  Provost}{Saar-Tsechansky et~al\mbox{.}}{2009}]%
        {saar2009active}
\bibfield{author}{\bibinfo{person}{Maytal Saar-Tsechansky},
  \bibinfo{person}{Prem Melville}, {and} \bibinfo{person}{Foster Provost}.}
  \bibinfo{year}{2009}\natexlab{}.
\newblock \showarticletitle{Active feature-value acquisition}.
\newblock \bibinfo{journal}{\emph{Management Science}} \bibinfo{volume}{55},
  \bibinfo{number}{4} (\bibinfo{year}{2009}), \bibinfo{pages}{664--684}.
\newblock


\bibitem[\protect\citeauthoryear{Thomson}{Thomson}{2011}]%
        {thomson2011fair}
\bibfield{author}{\bibinfo{person}{William Thomson}.}
  \bibinfo{year}{2011}\natexlab{}.
\newblock \showarticletitle{Fair allocation rules}.
\newblock In \bibinfo{booktitle}{\emph{Handbook of social choice and welfare}}.
  Vol.~\bibinfo{volume}{2}. \bibinfo{publisher}{Elsevier},
  \bibinfo{pages}{393--506}.
\newblock


\end{thebibliography}

\appendix
\section{Proofs}
\label{app:math}

We prove Theorem~\ref{thm:main} in a more general setting in which we allow the cost of screening \(\cs\) and cost of allocation \(\ca\) to vary for each individual.
In the special case of no screening (but with applicant-specific costs of allocation), we note that our optimization problem is equivalent to the fractional knapsack problem. 
For fractional knapsack, the greedy strategy---in which one packs items in descending order of their value per weight---yields an optimal solution~\cite{dantzig1957discrete}.
In our case, we show that the same approach can be used to find an optimal allocation of loans once the set of applicants to screen has been appropriately selected.

To start, we generalize our definition of a threshold policy to account for the applicant-specific allocation costs.

\begin{defn}[Cost-Aware Threshold Policy]
    A \emph{cost-aware threshold policy} is an allocation policy \(\C A\) for some fixed \(t_j \in \B R \cup \{-\infty, \infty\}\) and \(\alpha_j \in [0,1]\), \(j = 1, \ldots, m\), such that
        \begin{equation*}
            A_i = \begin{cases}
                1 & \hat U_i / c_i > t_{g_i},\\
                \alpha_{g_i} & \hat U_i / c_i = t_{g_i}, \\
                0 & \hat U_i / c_i < t_{g_i},
            \end{cases}
        \end{equation*}
    where \(g_i\) denotes the group membership of individual \(i\) and \(c_i\) denotes the cost of allocating resources to individual \(i\).
\end{defn}

We first analyze the setting of a single group, in which case our threshold policy will have a single threshold, \(t\).
Lemma~\ref{lem:main} shows that in this case cost-aware threshold policies are non-dominated:
there cannot be an allocation policy with both higher expected utility and lower cost than a single-threshold policy.
Lemma~\ref{lem:full_range} is an existence theorem for cost-aware threshold policies, which shows that if a particular expected cost or utility can be achieved by an allocation policy, it can be achieved by a threshold policy.

\begin{lem}
\label{lem:main}
    Let \(\C T\) be a cost-aware threshold policy with a single threshold \(t > 0\), and suppose \(\C T\) has an expected cost \(C\)---i.e., \(\sum_i \EE [c_i \cdot T_i] = C\)---and expected utility \(\Upsilon\)---i.e., \(\sum_i \EE [U_i \cdot T_i] = \Upsilon\). Let \(\C A\) be any other allocation policy.
    \begin{description}
        \item[Case 1:] If the expected cost of \(\C A\) is \(C\), then the expected utility of \(\C A\) is less than or equal to \(\Upsilon\).
        \item[Case 2:] If the expected utility of \(\C A\) is \(\Upsilon\), then the expected cost of \(\C A\) is greater than or equal to \(C\).
    \end{description}
\end{lem}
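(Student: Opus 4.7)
The plan is to prove both cases simultaneously by establishing a pointwise bound that relates the utility gap to the cost gap between any allocation policy $\C A$ and the threshold policy $\C T$, and then integrating.

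First I would rewrite both expected utilities in terms of the post-screening estimates $\hat U_i$ using iterated expectation. Because any allocation policy is a (possibly randomized) function of the post-screening information, conditioning on that information and using $\hat U_i = \EE[U_i \mid \text{post-screening info}]$ gives $\EE[U_i A_i] = \EE[\hat U_i A_i]$, and analogously $\EE[U_i T_i] = \EE[\hat U_i T_i]$. This places utility and cost on the same footing and enables a pointwise comparison.

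Next I would establish the key pointwise inequality
\[
\hat U_i \, (A_i - T_i) \;\le\; t \, c_i \, (A_i - T_i)
\]
for every index $i$ and every realization, by splitting on the sign of $\hat U_i / c_i - t$. On the event $\{\hat U_i / c_i > t\}$ the threshold rule forces $T_i = 1$, so $A_i - T_i \le 0$; multiplying the strict inequality $\hat U_i > t c_i$ by this nonpositive quantity flips its direction and yields the claim. On $\{\hat U_i / c_i < t\}$ the rule forces $T_i = 0$, so $A_i - T_i \ge 0$ and multiplying the opposite inequality by this nonnegative quantity again yields the claim. On $\{\hat U_i / c_i = t\}$ the two sides are equal. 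Summing over $i$ and taking expectations then gives
\[
\EE\!\left[\sum_i U_i A_i\right] - \Upsilon \;\le\; t\left(\EE\!\left[\sum_i c_i A_i\right] - C\right).
\]
Case~1 then follows immediately, because if $\C A$ has expected cost $C$ the right-hand side vanishes. Case~2 follows because $t > 0$ allows us to divide: if the left-hand side is zero then the expected cost of $\C A$ is at least $C$.

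I do not anticipate a serious technical obstacle; the entire argument is a stochastic analogue of the standard exchange proof for greedy fractional knapsack. The only point that deserves care is justifying the tower-property rewriting when $A_i$ is a randomized function of the post-screening information. This amounts to observing that, without loss of generality, the external randomization used by $\C A$ may be taken independent of $U_i$ given the post-screening information, so that $\EE[U_i A_i \mid \text{post-screening info}] = A_i \hat U_i$. Everything beyond this reduces to the three-way case analysis above together with linearity of expectation.
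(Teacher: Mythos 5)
Your proposal is correct and follows essentially the same route as the paper's proof: both rewrite expected utility in terms of $\hat U_i$ via the tower property, establish the termwise inequality comparing $\hat U_i (A_i - T_i)$ to $t\, c_i (A_i - T_i)$ by splitting on whether $\hat U_i / c_i$ is above, at, or below the threshold, and then read off both cases from the resulting aggregate inequality using $t > 0$. The only cosmetic difference is that the paper phrases the case split via the positive and negative parts of $T_i - A_i$ rather than your direct pointwise argument.
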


\begin{proof}
Let \(\Delta_i = T_i - A_i\), and set \(\Delta^-_i\) be the negative part of \(\Delta_i\), i.e., \(\Delta_i \cdot \B 1_{\Delta_i \leq 0}\).
Similarly, let \(\Delta^+_i\) be the positive part.

The variables \(\Delta^+_i\) represent circumstances in which \(\C A\) ``saves money'' in expectation over \(\C T\), and \(\Delta_i^-\) represents how this portion of the budget is reallocated.

Consider the following events:
    \begin{align*}
        E^>_i &= \{\hat U_i > c_i \cdot t\},\\
        E^=_i &= \{\hat U_i = c_i \cdot t\},\\
        E^<_i &= \{\hat U_i < c_i \cdot t\}.
    \end{align*}
Note that \(\Delta^+_i \neq 0\) only on some subset of \(E^>_i \cup E^=_i\).
likewise, \(\Delta^-_i \neq 0\) only on a subset of \(E^<_i \cup E^=_i\).
Informally, \(\C A\) can only save in expected cost over \(\C T\) when \(\hat U_i \geq t\), and can only reallocate savings to circumstances in which \(\hat U_i \leq t\).

Because \(\Delta^+_i \geq 0\) and \(\Delta^-_i \leq 0\), it consequently follows that
    \begin{equation}
    \label{eq:gt}
        \Delta^+_i \cdot \hat U_i \geq \Delta^+_i \cdot c_i \cdot t,
    \end{equation}
and
    \begin{equation}
    \label{eq:lt}
        \Delta^-_i \cdot \hat U_i \geq \Delta^-_i \cdot c_i \cdot t.
    \end{equation}
In particular,
    \begin{align*}
        \EE \left[ \sum_{i = 1}^n \Delta_i \hat U_i \right]
            &= \sum_{i = 1}^n \EE [\Delta_i \hat U_i]\\
            &= \sum_{i = 1}^n \bigg( \Pr(E_i^>) \cdot \EE [\Delta_i^+ \cdot \hat U_i \mid E_i^>]\\
            &\hspace{0.9cm} + \Pr(E_i^=) \cdot \EE [\Delta_i \cdot \hat U_i \mid E_i^=]\\
            &\hspace{0.9cm} + \Pr(E_i^<) \cdot \EE [\Delta_i^- \cdot \hat U_i \mid E_i^<] \bigg)\\
            &\geq \sum_{i = 1}^n \bigg( \Pr(E_i^>) \cdot \EE [\Delta_i^+ \cdot c_i \cdot t \mid E_i^>]\\
            &\hspace{0.9cm} + \Pr(E_i^=) \cdot \EE [\Delta_i \cdot c_i \cdot t \mid E_i^=]\\
            &\hspace{0.9cm} + \Pr(E_i^<) \cdot \EE [\Delta_i^- \cdot c_i \cdot t \mid E_i^<] \bigg)\\
            &= t \cdot \EE \left[ \sum_{i = 1}^n c_i \Delta_i \right].
    \end{align*}
Here the inequality follows from Eqs.~\eqref{eq:gt} and \eqref{eq:lt}.

Now, since any allocation policy is by definition conditionally independent of \(U_i\) given \(\hat U_i\), it follows that \(\EE [A_i U_i] = \EE [A_i \hat U_i]\) and \(\EE [T_i U_i] = \EE [T_i \hat U_i]\). In consequence,
\begin{equation}
    \label{eq:key}
        \EE \left[ \sum_{i = 1}^n \Delta_i U_i \right] \geq t \cdot \EE \left[ \sum_{i = 1}^n c_i \Delta_i \right].
    \end{equation}

In Case 1, the expected costs of \(\C A\) and \(\C T\) are equal. It follows that \(\EE \left[ \sum_i c_i \Delta_i \right] = 0\); consequently, by Eq.~\eqref{eq:key}, \(\EE \left[ \sum_i \Delta_i U_i \right] \geq 0\), and so the expected utility of \(\C T\) is greater than or equal to that of \(\C A\).

In Case 2, the expected utilities of \(\C A\) and \(\C T\) are equal, so that \(\EE \left[ \sum_i \Delta_i U_i \right] = 0\). Then, again by Eq.~\eqref{eq:key}, since \(t > 0\), \(\EE \left[ \sum_i c_i \Delta_i \right] \leq 0\). Therefore the expected cost of \(\C T\) is less than or equal to that of \(\C A\).
\end{proof}

\begin{lem}
\label{lem:full_range}
    Let \(\Upsilon > 0\) be an achievable expected utility---that is, there is an allocation policy \(\C A\) such that \(\EE[\sum_i A_i U_i] = \Upsilon\)---and let \(C\) be an achievable expected cost.
    Then,
        \begin{itemize}
            \item There exists a cost-aware threshold policy \(\C T\) of expected utility \(\Upsilon\); and,
            \item There exists a (possibly different) cost-aware threshold policy \(\C T\) of expected cost \(C\).
        \end{itemize}
\end{lem}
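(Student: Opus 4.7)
My plan is to apply an intermediate-value-theorem argument to the one-parameter family of cost-aware threshold policies. Writing $T_i(t, \alpha) = \B 1_{\hat U_i / c_i > t} + \alpha \B 1_{\hat U_i / c_i = t}$, I define
\[
u(t, \alpha) = \sum_i \EE[U_i T_i(t, \alpha)]
\quad\text{and}\quad
k(t, \alpha) = \sum_i \EE[c_i T_i(t, \alpha)].
\]
Because each $T_i$ is a function of $\hat U_i$, the tower-property identity invoked in Lemma~\ref{lem:main} yields $u(t, \alpha) = \sum_i \EE[\hat U_i T_i(t, \alpha)]$, which is the handle I exploit throughout.

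I would handle the expected-cost half first, as it is the cleaner of the two. For fixed $\alpha = 0$ the map $t \mapsto k(t, 0)$ is weakly decreasing, with $k(+\infty, 0) = 0$ and $k(-\infty, 0) = \sum_i c_i$, and every achievable cost $C$ lies in this interval since $A_i \in [0, 1]$. A bounded-convergence argument shows $t \mapsto k(t, 0)$ is right-continuous, while its left-limit at any $t$ equals $k(t, 1)$, with jump size $\sum_i \EE[c_i \B 1_{\hat U_i / c_i = t}]$. Setting $t^* = \inf\{t : k(t, 0) < C\}$, either $k(t^*, 0) = C$ (take $\alpha = 0$) or $C$ falls strictly inside the jump at $t^*$; in the latter case, linearity of $k(t^*, \cdot)$ in $\alpha$ produces a unique $\alpha \in (0, 1)$ with $k(t^*, \alpha) = C$.

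For the expected-utility half the strategy is the same, but global monotonicity of $u$ fails since $\hat U_i$ may be negative, so I would restrict the sweep to $t \geq 0$. On this range $u(\cdot, 0)$ remains right-continuous and weakly decreasing, because $\hat U_i > c_i t \geq 0$ forces the included summands to be non-negative, so shrinking the included set as $t$ grows can only reduce $u$. The endpoints are $u(+\infty, 0) = 0$ and $u(0, 1) = \sum_i \EE[\max(\hat U_i, 0)] =: \Upsilon_+$, and every achievable $\Upsilon > 0$ satisfies $\Upsilon = \sum_i \EE[A_i \hat U_i] \leq \Upsilon_+$, placing $\Upsilon$ in $(0, \Upsilon_+]$. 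The same inf-plus-interpolation recipe then produces a threshold policy of expected utility exactly $\Upsilon$. The main obstacle is verifying that $u(t^*, \cdot)$ and $k(t^*, \cdot)$ genuinely sweep out the full jump as $\alpha$ ranges over $[0, 1]$, which reduces to identifying the left-limit of $u(\cdot, 0)$ at $t^*$ with $u(t^*, 1)$ via a careful pointwise analysis of the indicators $\B 1_{\hat U_i / c_i > t}$; once that identification is in hand, the remainder is routine.
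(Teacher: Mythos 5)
Your proposal is correct and follows essentially the same route as the paper: sweep the threshold $t$, use monotonicity of the induced utility (restricted to $t \geq 0$) and cost functions together with the identity $\EE[A_i U_i] = \EE[A_i \hat U_i]$ to see that every achievable value lies between the two endpoints, and interpolate across the jump at the critical threshold via the boundary probability $\alpha$. Your treatment of right-continuity, the left-limit identification $k(t^{*-},0)=k(t^*,1)$, and the linear interpolation in $\alpha$ is in fact more careful than the paper's rather terse construction, but it is the same argument.
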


\begin{proof}
    We first address the case of a given positive expected utility.
    Consider the function \(\Upsilon(t) : [0, \infty) \to \B R\) given by
        \begin{equation}
            \Upsilon(t) = \sum_{i = 1}^n \EE \left[ U_i \cdot \B 1_{\hat U_i / c_i \geq t} \right].
        \end{equation}
    We note three facts about \(\Upsilon(t)\).
    First, \(\Upsilon(t)\) is monotonically non-increasing for \(t \geq 0\).
    Second, \(\Upsilon(0)\) is the maximum expected utility achievable by any allocation policy.
    Third, \(\Upsilon(t) \to 0\) as \(t \to \infty\).
    
    Suppose \(0 < \upsilon \leq \Upsilon(0)\).
    Then there exists some maximal \(t\) such that \(\Upsilon(t) \geq \upsilon\).
    Let \(\rho = \sum_i \Pr(\hat U_i / c_i = t)\), \(\tau = \sum_i c_i \cdot t_i\), and \(\delta = \Upsilon(t) - \upsilon\).
    Define \(\alpha = \delta / (\tau \cdot \rho)\).
    Then the threshold policy defined by \((t, \alpha)\) will have expected utility \(\upsilon\).
    The result follows as all achievable positive expected utilities lie in the half-open interval \((0, \Upsilon(0)]\).
    
    The proof in the case of expected cost is similar.
\end{proof}

We are now equipped to move to analyze settings in which there are multiple groups. We prove the following generalization of Theorem~\ref{thm:main}.

\begin{thm}
     Suppose the constrained optimization problem defined by Eqs.~\eqref{eq:def1}, \eqref{eq:def2}, and \eqref{eq:def3} has a solution \((\C P^*, \C A^*)\).
     Then there is a (not necessarily single-threshold) cost-aware threshold policy \(\C T^*\) such that \((\C P^*, \C T^*)\) is also a solution.
\end{thm}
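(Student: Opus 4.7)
The plan is to take the optimal pair $(\C P^*, \C A^*)$, fix the screening policy $\C P^*$, and replace the allocation policy group-by-group with a cost-aware threshold policy having a group-specific threshold, so that every constraint continues to hold and the overall expected utility is unchanged. The two single-group lemmas already in hand do the heavy lifting; the theorem then reduces to bookkeeping across groups.

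Concretely, for each group $G_j$, let $\Upsilon_j = \EE\bigl[\sum_{i \in G_j} U_i A_i^*\bigr]$ and $C_j = \EE\bigl[\sum_{i \in G_j} c_i A_i^*\bigr]$ denote the expected utility and allocation cost that $\C A^*$ delivers on $G_j$. Feasibility of $(\C P^*, \C A^*)$ gives $\Upsilon_j \geq \Lambda_j$ and $\sum_j C_j + \EE\bigl[\sum_i \cs S_i^*\bigr] \leq B$. I would then apply Lemma~\ref{lem:full_range}, specialized to the single-group subproblem on $G_j$, to obtain a cost-aware threshold policy $\C T_j^*$ with threshold $t_j$ and randomization $\alpha_j$ attaining expected utility exactly $\Upsilon_j$. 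Lemma~\ref{lem:main}, Case 2, applied to $G_j$ with the restriction of $\C A^*$ to $G_j$ playing the role of the comparison policy, then gives that the expected cost of $\C T_j^*$ on $G_j$ is at most $C_j$. Concatenating $(t_j, \alpha_j)$ over $j$ yields a single cost-aware threshold policy $\C T^*$, and verifying feasibility of $(\C P^*, \C T^*)$ is direct: diversity constraints hold with equality in utility on each group, the screening cost is unchanged, and each $C_j$ weakly decreases, so the budget constraint continues to hold. Since the total utility matches that of $(\C P^*, \C A^*)$, the pair $(\C P^*, \C T^*)$ is also optimal.

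The main obstacle is the positivity assumption $t > 0$ in Lemma~\ref{lem:main}, Case 2, which can fail when a tight diversity constraint forces allocation in a regime where $t_j$ is nonpositive, or when $\Upsilon_j \leq 0$. I would handle this by a short case analysis: if $\Upsilon_j \leq 0$, take $t_j = \infty$ (and absorb any discrepancy into $\alpha_j$), since $\C A^*$ can always be modified on $G_j$ to match $\Upsilon_j$ without worsening anything; if $\Upsilon_j > 0$, strengthen Lemma~\ref{lem:full_range} to produce $t_j > 0$, which is without loss because an optimal $\C A^*$ will not allocate to applicants with $\hat U_i < 0$ (doing so would simultaneously consume budget and reduce utility, contradicting optimality). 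A secondary subtlety is that $\C A^*$ may depend on post-screening estimates from other groups, whereas a threshold policy depends only on within-group $\hat U_i$; this is benign because Lemmas~\ref{lem:main} and~\ref{lem:full_range} are stated in terms of marginal expectations of utility and cost on $G_j$, and these are precisely what the threshold replacement matches, so the joint distributional dependence outside $G_j$ is irrelevant to the argument.
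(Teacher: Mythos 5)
Your proof is correct, but it runs the two lemmas in the opposite direction from the paper. The paper applies Lemma~\ref{lem:full_range} to produce, on each group \(G_j\), a threshold policy with the same expected \emph{cost} as the restriction \(\C A_j^*\), and then invokes Case~1 of Lemma~\ref{lem:main} to conclude that utility can only go up; feasibility of the budget is then immediate (cost unchanged), the diversity constraints follow from the utility increase, and optimality of \(\C A^*\) forces the utilities to be equal. You instead match the expected \emph{utility} \(\Upsilon_j\) exactly and use Case~2 to show the cost weakly decreases. Both decompositions are valid and the bookkeeping across groups is the same; your variant is in fact precisely the argument the paper uses later for the equality-constrained version (Lemma~\ref{lem:modest_gen}). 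The trade-off is that Case~2 genuinely needs the hypothesis \(t > 0\) (to divide the inequality \(0 \geq t \cdot \EE[\sum_i c_i \Delta_i]\) by \(t\)), whereas the inequality underlying Case~1 holds for any \(t\), so the paper's direction sidesteps the positivity issue that you had to patch with a case analysis on \(\Upsilon_j \leq 0\) and an argument that an optimal policy never allocates at negative \(\hat U_i\). Your patch is sound (note \(\Lambda_j \geq 0\) already forces \(\Upsilon_j \geq 0\)), and identifying that gap is to your credit --- the paper itself leans on Case~2 in Lemma~\ref{lem:modest_gen} without addressing it --- but the cost-matching route is the cleaner of the two here. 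Your closing observation that only the marginal distributions of utility and cost on each \(G_j\) matter, so cross-group dependence of \(\C A^*\) is irrelevant, is also correct and matches the paper's implicit treatment.
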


\begin{proof}[Proof of Theorem \ref{thm:main}]
    Consider a solution screening and allocation policy pair \((\C P^*, \C A^*)\) satisfying the allocation diversity constraints.
    
    Note that each of the collections \(\C A_j^* = \{A_i^*\}_{i \in G_j}\) defines an allocation policy on the subpopulation \(G_j\).
    Applying Lemma~\ref{lem:full_range} to each of them yields a threshold policy \(\C T_j^*\)---defined by the pair \((t_j, \alpha_j)\)---of the same expected cost as \(\C A_j^*\). 
    By Lemma~\ref{lem:main}, \(\C T_j^*\) achieves the same or higher expected utility on \(G_j\) as \(\C A_j^*\).
    
    Let \(\C T^*\) be the threshold policy defined by thresholds \(t_1, \ldots, t_m\) and boundary randomization probabilities \(\alpha_1, \ldots, \alpha_m\).
    The expected cost of \(\C T^*\) is the same as \(\C A^*\).
    Moreover, \(\C T^*\) satisfies the group utility constraints, since the utility it achieves on each group \(G_j\) is greater than or equal to \(\C A^*\).
    Lastly, the expected utility achieved by \(\C T^*\) is greater than or equal to that of \(\C A^*\).
    
    Since \(\C A^*\) was assumed optimal, it must be that the expected utilities are equal. Therefore the pair \((\C P^*, \C T^*)\) is also a solution to the constrained optimization problem defined by Eqs.~\eqref{eq:def1}, \eqref{eq:def2}, and \eqref{eq:def3}.
\end{proof}

Theorem~\ref{thm:main} follows as an immediate corollary.

In some cases, it is useful to impose equality rather than inequality diversity allocation constraints.
For instance, in some situations the decision maker may wish to obtain the Pareto frontier.
Lemma~\ref{lem:modest_gen} ensures that it is still sufficient when specific utilities must be achieved on subgroups to restrict to the set of threshold policies when searching for a globally optimal screening and allocation policy.

\begin{lem}
\label{lem:modest_gen}
    The conclusion of Theorem~\ref{thm:main} still holds even if the inequalities in Eq.~\eqref{eq:def3} are replaced with equalities, i.e., if the optimal policy pair \((\C P^*, \C A^*)\) satisfies
        \begin{equation}
            \EE \left[ \sum_{i \in G_j} U_i A_i \right] = \Lambda_j.
        \end{equation}
\end{lem}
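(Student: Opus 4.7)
The plan is to mirror the proof of Theorem~\ref{thm:main} but to swap the roles of expected cost and expected utility when invoking Lemmas~\ref{lem:main} and~\ref{lem:full_range}. The original argument matches the expected allocation cost on each group and then uses Case~1 of Lemma~\ref{lem:main} to conclude that utility cannot decrease. Here, because the group utility constraints must now be satisfied with equality, I would instead match the expected utility on each group and use Case~2 of Lemma~\ref{lem:main} to conclude that the expected cost cannot increase, which preserves budget feasibility.

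Concretely, given an optimal pair \((\C P^*, \C A^*)\) satisfying \(\EE[\sum_{i \in G_j} U_i A_i^*] = \Lambda_j\) for all \(j\), I would regard each \(\C A_j^* = \{A_i^*\}_{i \in G_j}\) as an allocation sub-policy on \(G_j\) and apply Lemma~\ref{lem:full_range} to produce a cost-aware threshold policy \(\C T_j^*\)---defined by some \((t_j, \alpha_j)\)---of expected utility exactly \(\Lambda_j\) on \(G_j\). Case~2 of Lemma~\ref{lem:main} then gives that the expected allocation cost of \(\C T_j^*\) on \(G_j\) is no greater than that of \(\C A_j^*\). Assembling the \(\C T_j^*\) into a single multi-threshold policy \(\C T^*\) with thresholds \(t_1, \ldots, t_m\) and boundary randomization probabilities \(\alpha_1, \ldots, \alpha_m\) yields a pair \((\C P^*, \C T^*)\) for which each equality diversity constraint is met on the nose and for which the budget constraint in Eq.~\eqref{eq:def2} is preserved, since the screening term depends only on \(\C P^*\) (which is unchanged) and the total allocation cost weakly decreases.

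Because the groups \(\{G_1, \ldots, G_m\}\) partition the applicant pool and every constraint in Eq.~\eqref{eq:def3} is now an equality, the total expected utility of any feasible policy is forced to equal \(\sum_{j = 1}^m \Lambda_j\). Hence \((\C P^*, \C T^*)\) attains the same objective value in Eq.~\eqref{eq:def1} as \((\C P^*, \C A^*)\) and is therefore also a solution to the constrained optimization problem, with allocation component a cost-aware threshold policy, as required.

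The main obstacle I anticipate is the boundary case \(\Lambda_j \leq 0\), since Lemma~\ref{lem:full_range} is stated only for strictly positive target utilities. When \(\Lambda_j = 0\) the zero-allocation on \(G_j\), i.e.\ the threshold policy with \(t_j = \infty\), trivially works and has zero cost. For \(\Lambda_j < 0\), the argument for Lemma~\ref{lem:full_range} extends essentially verbatim: the function \(\Upsilon(t)\) used there is continuous (after absorbing boundary mass via \(\alpha_j\)) and monotonically non-increasing, and ranges continuously between its minimum and maximum achievable values, so any achievable target---positive or negative---is attained by some \((t_j, \alpha_j)\). I would either fold this mild extension into a brief parenthetical strengthening of Lemma~\ref{lem:full_range} or handle it as a separate short remark.
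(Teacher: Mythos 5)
Your proposal matches the paper's own argument: the paper likewise applies Lemma~\ref{lem:full_range} to construct, on each constrained group, a threshold policy achieving exactly utility \(\Lambda_j\), and then invokes Case~2 of Lemma~\ref{lem:main} to conclude the expected cost weakly decreases, so budget feasibility and optimality are preserved. Your added remarks (that the objective is pinned to \(\sum_j \Lambda_j\), and the treatment of \(\Lambda_j = 0\)) are correct refinements of the same approach rather than a different route.
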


\begin{proof}
    The proof is identical to that of Theorem~\ref{thm:main}, except that Lemma~\ref{lem:full_range} is used to construct a threshold policy \(\C T^*\) achieving exactly utility \(\Lambda_j\) on any constrained group \(G_j\).
    Case 2 of Lemma~\ref{lem:main} then shows that this utility is achieved at possibly less expected cost than \(\C A^*\).
    It follows immediately that the pair \((\C P^*, \C T^*)\) is a solution to the optimization problem.\footnote{
        In this case, it is actually possible that \(\C T^*\) will result in expected cost savings over \(\C A^*\), even though \(\C A^*\) is optimal. This can occur if there is no remaining positive utility to ``spend'' the savings on, since resources are already allocated at total expected cost less than \(B\) in all cases where \(\hat U_i\) is positive.
    }
\end{proof}

\section{Optimized Procedure}
\label{app:lp}

In our experiments we consider cases in which screening provides no additional information about one group of individuals (i.e., \(D_i\) is concentrated at a single point for \(i \in G_j\)).
In this case, it is possible to significantly reduce the number of LPs the decision maker must solve to find an optimal policy or calculate the Pareto frontier.
For instance, a lender may already possess all available information relevant to the creditworthiness of those with traditional credit histories.

In this case, the cost of allocating resources to group \(G_j\) does not depend on the screening policy \(\C P\).
Therefore, rather than sweeping over all possible pairs of thresholds and boundary randomization probabilities \((t_j, \alpha_j)\), 
one can encode the allocation policy to \(G_j\) directly as a collection of allocation probabilities \(\{a_i\}_{i \in G_j}\).

For notational simplicity, suppose without loss of generality that \(m = 2\), that \(G_1 = \{1, \ldots, n_1\}\) and \(G_2 = \{n_1 + 1, \ldots, n_1 + n_2\}\), and that \(D_i\) is a pointmass for all \(i \in G_2\).

Then, let \(p = (p_1, \ldots, p_{n_1})\) be a length \(n_1\) vector of probabilities indicating the probability of screening the members of \(G_1\). Let \(a = (a_{n_1 + 1}, \ldots, a_{n_1 + n_2})\) be a length \(n_2\) vector of probabilities indicating the probability of allocating resources to members of \(G_2\). We keep the rest of the notation the same as in Section \ref{sec:solution}.

Fix threshold \(t_1\) and boundary randomization probability \(\alpha_1\) for \(G_1\). Then, the expected utility of any pair of screening policies (for \(G_1\)) and allocation policies (for \(G_2\)) can be expressed as a linear function of \(p\) and \(a\):
    \begin{equation}
    \label{eq:lp1}
        \sum_{i \in G_1} \left[ q_i \cdot e_i \cdot p_i + o_i \cdot \mu_i \cdot (1 - p_i) \right] + \sum_{i \in G_2} \left[ \mu_i \cdot a_i \right].
    \end{equation}

Likewise, our budget constraint can be expressed as follows:
    \begin{equation}
    \label{eq:lp2}
        \begin{split}
            B &\geq \sum_{i \in G_1} \left[ \cs \cdot p_i + \ca \cdot q_i \cdot p_i + \ca \cdot o_i \cdot (1-p_i) \right] \\
                &\hspace{.5cm} + \sum_{i \in G_2} \left[ \ca \cdot a_i \right]
        \end{split}
    \end{equation}
The allocation diversity constraint will take one of the following two forms. If the constrained group is \(G_1\),
    \begin{equation}
    \label{eq:lp3}
        \sum_{i \in G_1} \left[ q_i \cdot e_i \cdot p_i + o_i \cdot \mu_i \cdot (1 - p_i) \right] = \Lambda_1.
    \end{equation}
Otherwise, if the constrained group is \(G_2\), the constraint will be
    \begin{equation}
    \label{eq:lp4}
        \sum_{i \in G_2} \left[ \mu_i \cdot a_i \right] = \Lambda_2.
    \end{equation}
Lastly, we ensure that \(a\) and \(p\) are probability vectors:
    \begin{align}
    \label{eq:lp5}
        a_i &\leq 1,\\
        p_i &\leq 1.
    \end{align}

Together Eqs.~\eqref{eq:lp1}, \eqref{eq:lp2}, \eqref{eq:lp3}, and \eqref{eq:lp4}---given the initial data of the threshold policy \(\C T_1\) on \(G_1\) defined by the pair \((t_1, \alpha_1)\)---define a linear program in the \(n_1 + n_2\) decision variables \(p = (p_1, \ldots, p_n)\) and \(a = (a_{n_1 + 1}, \ldots, a_{n_1 + n_2})\) that can be solved for an optimal screening policy \(\C P^*\) (on \(G_1\)) and allocation policy \(\C T_2^*\) (on \(G_2\)).
By the results of Appendix \ref{app:math}, \(\C T_2^*\) can be assumed to have the form of a threshold policy.
Sweeping over all such pairs in a (discretized) space \(\B R \times [0, 1]\), the resulting triple \((\C P^*, \C T_1^*, \C T_2^*)\) that maximizes utility will represent a globally optimal screening and allocation policy.

\end{document}